\colorlet{myGreen}{green!50!black}
\colorlet{myLightgreen}{green}
\colorlet{myRed}{red!90!black}
\definecolor{myBlue}{rgb}{0.25, 0.0, 1.0}
\definecolor{myLightBlue}{rgb}{0.39, 0.58, 0.93}
\colorlet{myViolet}{myBlue!55!myRed}
\definecolor{myOrange}{rgb}{1.0, 0.66, 0.07}
\definecolor{magenta}{rgb}{0.94, 0.05, 0.53}
\definecolor{AO}{rgb}{0.0, 0.5, 0.0}
\definecolor{phthaloblue}{rgb}{0.0, 0.06, 0.54}
\definecolor{pistachio}{rgb}{0.58, 0.77, 0.45}
\definecolor{darkgoldenrod}{rgb}{0.72, 0.53, 0.04}
\begin{document}
\mainmatter              
\title{Graphs where Search Methods are Indistinguishable}
\titlerunning{Graphs where Search Methods are Indistinguishable}  
%
\author{Matjaž Krnc \and Nevena Pivač}
\authorrunning{Krnc and Pivač} 
%
\tocauthor{Matjaž Krnc, Nevena Pivač}
\institute{
The Faculty of Mathematics, Natural Sciences and Information Technologies, \\
University of Primorska, Slovenia.\\
}

\maketitle              

\begin{abstract}
Graph searching is one of the simplest and most widely used tools in graph algorithms. Every graph search method is defined using some particular selection rule, and the analysis of  the corresponding vertex orderings can aid greatly in devising algorithms, writing proofs of correctness, or recognition of various graph families.

We study graphs where the sets of vertex orderings produced by two different search methods coincide. We characterise such graph families for ten pairs from the best-known set of graph searches: Breadth First Search (BFS), Depth First Search (DFS),  Lexicographic Breadth First Search (LexBFS) and Lexicographic Depth First Search (LexDFS), and Maximal Neighborhood Search (MNS). 

\keywords{Graph Search Methods, Breadth First Search, Depth First Search.}
\end{abstract}

\newif\iflong
\newif\ifshort
\shorttrue

\section{Introduction}
Graph search methods (for instance, Depth First Search and Breadth First Search) are among essential concepts classically taught at the undergraduate level of computer science faculties worldwide.
Various types of graph searches have been  studied since the 19th century, and used to solve diverse problems, from solving mazes, to linear-time recognition of interval graphs, finding minimal path-cover of co-comparability graphs, finding perfect elimination order, or optimal coloring of a chordal graph, and many others  \cite{arikati1990linear,beisegel2018characterising,corneil2013,corneil2016power,golumbic2004book,kohler2014linear,rose1976,tarjan1972depth}. 

In its most general form, a \emph{graph search} (also \emph{generic search} \cite{corneil2008unified}) is a method of traversing vertices of a given graph such that every prefix of the obtained vertex ordering induces a connected graph. 
This general definition of a graph search  leaves much freedom for a selection rule determining which node is chosen next. By defining some specific rule that restricts this choice, various different graph search methods are defined. Other search methods that we focus on in this paper are Breadth First Search, Depth First Search, Lexicographic Breadth First Search, Lexicographic Depth First Search, 
and Maximal Neighborhood Search.  

This paper is structured as follows. 
In  \cref{sec:prelims} we briefly present the studied graph search methods, and then state the obtained results in \cref{sec:contributions}.
In \cref{sec:P4-C4-pan-diamond} we 
provide a short proof of \cref{thmGENdfsBFS}, as it is the easiest to deal with.
Due to lack of space we omit the proofs of \cref{thm:(L)DFS,thm:thmMNSldfsLBFS}, and 
provide some directions for further work in \cref{sec:conclusion}.

\section{Preliminaries}\label{sec:prelims}
We now briefly describe the above-mentioned graph search methods, and give the formal definitions. 
Note that the definitions below are not given in 
a historically standard form, but rather as so-called \emph{three-point conditions}, due to Corneil and Kruger \cite{corneil2008unified} and also Br\" andstadt~et.~al.~\cite{MR1454439}.

\vspace{2.5mm}\noindent\textbf{Breadth First Search} (BFS), first introduced in 1959 by Moore~\cite{moore1959shortest}, 
\iflong
represents one of the fundamental algorithms and subroutines in computer science. It 
\fi
is a restriction of a generic search which puts unvisited vertices in a queue and visits a first vertex from the queue in the next iteration. After visiting a particular vertex, all its unvisited neighbors are put at the end of the queue, in an arbitrary order.
\begin{definition}
\label{thm:bfs-ordering-characterization}
An ordering $\sigma$ of $V$ is a BFS-ordering if and only if the following holds: if $ a <_\sigma b <_\sigma c$ and $ac \in E$ and $ab \notin E$, then there exists a vertex $d$ such that $d < a$ and $db \in E$. 
\end{definition}
Any BFS ordering of a graph $G$ starting in a vertex $v$ results in a rooted tree (with root $v$), which contains the shortest paths from $v$ to any other vertex in $G$~(see~\cite{even2011graph}). 
We use this property implicitly throughout the paper.

\vspace{2.5mm}\noindent\textbf{Depth First Search} (DFS), in contrast with the BFS, traverses the graph as deeply as possible, visiting a neighbor of the last visited vertex whenever it is possible, and backtracking only when all the neighbors of the last visited vertex are already visited. In DFS, the unvisited vertices are put on top of a stack, so visiting a first vertex in a stack means that we always visit a neighbor of the most recently visited vertex. 
\begin{definition}
\label{thm:dfs-ordering-characterization}
An ordering $\sigma$ of V is a DFS-ordering if and only if the following holds: if $ a <_\sigma b <_\sigma c$ and $ac \in E$ and $ab \notin E$, then there exists a vertex $d$ such that $a<_\sigma d <_\sigma b$ and $db \in E$. 
\end{definition}
The algorithm for DFS has been known since the nineteenth century as a technique for threading mazes, known under the name \emph{Tr\' emaux's algorithm} (see~\cite{lucas1882recreations}).

\vspace{2.5mm}\noindent\textbf{Lexicographic Breadth First Search} (LexBFS) was introduced in the 1970s by Rose, Tarjan and Lueker~\cite{rose1976} as a part of an algorithm for recognizing chordal graphs in linear time.
Since then, it has been used in many graph algorithms mainly for the recognition of various graph classes. 
\begin{definition}
\label{thm:lbfs-ordering-characterization}
An ordering $\sigma$ of V is a LexBFS ordering if and only if the following holds: if $ a <_\sigma b <_\sigma c$ and $ac \in E$ and $ab \notin E$, then there exists a vertex $d$ such that $d<_\sigma a$ and $db \in E$ and $dc\notin E$. 
\end{definition}
LexBFS is a restricted version of Breadth First Search, where the usual queue of vertices is replaced by a queue of unordered subsets of the vertices which is sometimes refined, but never reordered. 

\vspace{2.5mm}\noindent\textbf{Lexicographic Depth First Search} (LexDFS) was introduced in 2008 by Corneil and Krueger \cite{corneil2008unified} and represents a special instance of a Depth First Search. 
\begin{definition}
\label{thm:ldfs-ordering-characterization}
An ordering $\sigma$ of V is a LexDFS ordering if and only if the following holds: if $ a <_\sigma b <_\sigma c$ and $ac \in E$ and $ab \notin E$, then there exists a vertex $d$ such that $a<_\sigma d <_\sigma b$ and $db \in E$ and $dc\notin E$. 
\end{definition}

\vspace{2.5mm}\noindent\textbf{Maximal Neighborhood Search} (MNS), introduced in 2008 by Corneil and Krueger~\cite{corneil2008unified}, is a common generalization of LexBFS, LexDFS, and MCS, and also of Maximal Label Search (see \cite{Berry2009mls} for definition). 
\begin{definition}
\label{thm:mns-ordering-characterization}
An ordering $\sigma$ of $V$ is an MNS ordering if and only if the following
statement holds: If $a <_\sigma  b <_\sigma  c$ and $ac \in E$ and $ab \notin E$, then there exists a vertex $d$ with $d <_\sigma b$ and $db \in E$ and $dc\notin E$.
\end{definition}
The MNS algorithm uses the set of integers as the label, and at every step of iteration chooses the vertex with maximal label under set inclusion. 

Corneil \cite{corneil2008unified} exposed an interesting structural aspect of graph searches: the particular search methods can be seen as restrictions, or special instances of some more general search methods. For six well-known graph search methods they present a depiction, similar to the one in \cref{fig:graphsearchrelations2}, showing how those methods are related under the set inclusion.
For example, every LexBFS ordering is at the same time an instance of BFS and MNS ordering of the same graph. 
Similarly, every LexDFS ordering is at the same time also an instance of MNS, and of DFS~(see~\cref{fig:graphsearchrelations2}).
The reverse, however, is not true, and there exist orderings that are BFS and MNS, but not LexBFS, or that are DFS and MNS but not LexDFS.

\section{Problem description and results}\label{sec:contributions}
 \begin{figure}[b]
\centering
\begin{tikzpicture}[scale=.74]
        \node (gs) at (3,5.5) {Generic Search};%
          \node (bfs) at (0,4){BFS};
          \node (dfs) at (6,4) {DFS};%
          \node (mns) at (3,3.5) {MNS};%
          \node (mcs) at (3,2) {MCS};%
          \node (lbfs) at (0,2) {LexBFS};%
          \node (ldfs) at (6,2) {LexDFS};%
          \draw[line width=0.2mm] (gs)--(bfs);%
          \draw[line width=0.2mm] (gs)--(mns);%
          \draw[line width=0.2mm] (gs)--(dfs);%
          \draw[line width=0.2mm] (bfs)--(lbfs);%
          \draw[line width=0.2mm] (mcs)--(mns);%
          \draw[line width=0.2mm] (dfs)--(ldfs);%
          \draw[line width=0.2mm] (mns)--(lbfs);%
          \draw[line width=0.2mm] (mns)--(ldfs);%

        \node (gs) at (11,5.5) {Generic Search};%
          \node (bfs) at (8,4){BFS};
          \node (dfs) at (14,4) {DFS};%
          \node (mns) at (11,3.5) {MNS};%
          \node (lbfs) at (8,2) {LexBFS};%
          \node (ldfs) at (14,2) {LexDFS};%
          \draw[line width=0.3mm,blue] (gs)--(bfs);%
          \draw[line width=0.3mm,blue] (gs)--(lbfs);%
          \draw[line width=0.3mm,violet] (gs)--(mns);%
          \draw[line width=0.3mm,blue] (gs)--(dfs);%
          \draw[line width=0.3mm,blue] (gs)--(ldfs);%
          \draw[line width=0.3mm,blue] (bfs)--(dfs);%
          \draw[line width=0.3mm,violet] (bfs)--(lbfs);%
          \draw[line width=0.3mm,violet] (dfs)--(ldfs);%
          \draw[line width=0.3mm,green] (mns)--(lbfs);%
          \draw[line width=0.3mm,green] (mns)--(ldfs);%
          
\end{tikzpicture}
\caption{On the left: Hasse diagram showing how graph searches are refinements of one another. 
On the right is a summary of our results: Green pairs are equivalent on $\{P_4,C_4\}$-free graphs.
Violet pairs are equivalent on $\{$pan, diamond$\}$-free graphs.
Blue pairs are equivalent on $\{$paw, diamond, $P_4,C_4\}$-free graphs.
}
\label{fig:graphsearchrelations2}
\end{figure}
Since the connections in \cref{fig:graphsearchrelations2} represent relations of inclusion, it is natural to ask under which conditions on a graph $G$ the particular inclusion holds also in the converse direction. 
More formally, we say that two search methods are \emph{equivalent on a graph $G$} if the sets of vertex orderings produced by both of them are the same. 
We say that two graph search methods are \emph{equivalent on a graph class ${\cal G}$} if they are equivalent on every member $G\in {\cal G}$. 
Perhaps surprisingly, three different graph families suffice to describe graph classes equivalent for the ten pairs of graph search methods that we consider.
Those are described in \cref{thmGENdfsBFS,thm:(L)DFS,thm:thmMNSldfsLBFS}
below, but first we need a few more definitions.

All the graphs considered in the paper are finite and connected.
A \emph{$k$-pan} is a graph consisting of a $k$-cycle, with a pendant vertex added to it. We say that a graph is \emph{pan-free} if it does not contain a pan of any size as an induced subgraph. A $3$-pan is also known as a \emph{paw graph}.

\begin{restatable}{theorem}{thmGENdfsBFS}
\label{thmGENdfsBFS}Let $G$ be a connected graph. Then the following is equivalent: 
\begin{enumerate}[nosep, labelindent=\parindent,leftmargin=*, label = A\arabic*.]
    \item \label{it:gen4}Graph $G$ is $\{P_4, C_4$, paw, diamond$\}$-free.
    \item \label{it:gen1}Every graph search of $G$ is a DFS ordering of $G$.
    \item \label{it:gen2}Every graph search of $G$ is a BFS ordering of $G$.
    \item \label{it:bfsDfs}Any vertex-order of $G$ is a BFS, if and only if it is a DFS.
    
 \end{enumerate}
\end{restatable}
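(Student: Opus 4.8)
The plan is to replace the four forbidden subgraphs by a transparent structural description and then reduce every statement to a short check on that structure. First I would prove the auxiliary claim that a connected graph satisfies \ref{it:gen4} if and only if it is a complete graph $K_n$ or a star $K_{1,n}$. The easy direction is clear, since complete graphs and stars contain none of $P_4$, $C_4$, the paw, or the diamond as an induced subgraph. For the converse I would use that $P_4$-freeness makes $G$ a cograph, so a connected $G$ is a join: combining this with $C_4$-freeness in the triangle-free case forces $G=K_{1,n}$, while in the presence of a triangle I would invoke paw-freeness (a paw-free component is triangle-free or complete multipartite) together with diamond- and $C_4$-freeness to collapse the complete multipartite graph to a clique.

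The forward implications \ref{it:gen4} $\Rightarrow$ \ref{it:gen1}, \ref{it:gen2}, \ref{it:bfsDfs} then become routine. On $K_n$ every ordering is a graph search, and the hypothesis $ab\notin E$ of \cref{thm:bfs-ordering-characterization} and \cref{thm:dfs-ordering-characterization} never occurs, so the three-point conditions hold vacuously; hence every ordering is simultaneously a BFS- and a DFS-ordering. On a star the graph searches are exactly the orderings in which at most one leaf precedes the centre, and for any ordering a forbidden triple for BFS exists precisely when two leaves precede the centre, which is exactly the forbidden situation for DFS; thus the BFS- and DFS-orderings coincide and equal the graph searches. Finally \ref{it:bfsDfs} follows for free once \ref{it:gen1} and \ref{it:gen2} are known, since together they say $\{$graph searches$\}=\{$DFS-orderings$\}=\{$BFS-orderings$\}$.

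For the reverse direction I would argue by contraposition. Assuming $G$ is neither complete nor a star, the auxiliary claim supplies an induced $P_4$, $C_4$, paw, or diamond, and in each case I would exhibit two global orderings of $G$. The first is an honest BFS-ordering of $G$ that fails the DFS condition: rooting the BFS suitably places two non-adjacent vertices $a,b$ of the forbidden subgraph consecutively in one layer while a neighbour $c$ of $a$ appears later, so the DFS-rescuer cannot exist. This single ordering is a graph search that is not a DFS-ordering (refuting \ref{it:gen1}) and a BFS-ordering that is not a DFS-ordering (refuting \ref{it:bfsDfs}). The second is a DFS-ordering of $G$ that fails the BFS condition: rooting the DFS at a vertex $x$ of the subgraph and exploring so that a non-neighbour of $x$ is reached before some neighbour of $x$ produces a forbidden BFS triple whose rescuer would have to precede the root (refuting \ref{it:gen2}).

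I expect the main obstacle to be this last paragraph: although the local gadgets are immediate, one must ensure the violations survive when the ordering is extended over all of $G$. The clean way around this is to build the witnesses as genuine global BFS- and DFS-orderings, so membership in the respective search class is automatic, and to protect the witnessing triple structurally---by placing the critical vertex first (for the BFS-violation inside a DFS-ordering) or by making the critical pair consecutive (for the DFS-violation inside a BFS-ordering), neither of which is affected by whatever is appended afterwards. The point requiring care is that these layer and branch positions are computed in $G$ rather than in the gadget, so I would use that the forbidden subgraph is induced to guarantee that the relevant non-edges persist and that the distances from the root behave as needed.
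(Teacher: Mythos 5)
Your proposal is correct and follows essentially the same route as the paper: both reduce condition A1 to the statement that a connected $G$ is a clique or a star, verify directly that on cliques and stars the generic, BFS and DFS orderings all coincide, and in the converse direction exhibit explicit orderings violating A2, A3 and A4 whenever an induced $P_4$, $C_4$, paw or diamond (i.e.\ an induced path $a\text{-}b\text{-}c\text{-}d$ with $ac\notin E$) is present. The only minor differences are that you obtain the clique-or-star characterisation via cographs and the paw-free structure theorem rather than the paper's direct case split on a maximum clique, and that your witness refuting A2 is a full BFS-not-DFS ordering (rooted at $b$, so that $c$ and $a$ land in the first layer and can be made consecutive while $d$ comes later), whereas the paper settles for the even simpler generic search with prefix $(b,c,a)$.
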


\begin{restatable}{theorem}{thmDFSLDFS}
\label{thm:(L)DFS}
Let $G$ be a connected graph. Then the following is equivalent:  
\begin{enumerate}[nosep, labelindent=\parindent,leftmargin=*, label = B\arabic*.]
    \item Graph $G$ is $\{$pan, diamond$\}$-free.
    \item Every DFS ordering of $G$ is a LexDFS ordering of $G$.
    \item Every BFS ordering of $G$ is a LexBFS ordering of $G$.
    \item Every graph search of $G$ is an MNS ordering of $G$. 
    
\end{enumerate}
\end{restatable}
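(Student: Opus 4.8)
The plan is to prove \cref{thm:(L)DFS} by showing that each of B2, B3, B4 is equivalent to the structural condition B1. This is natural because the reverse inclusions LexDFS $\subseteq$ DFS, LexBFS $\subseteq$ BFS, and MNS $\subseteq$ (generic search) always hold (the left diagram of \cref{fig:graphsearchrelations2}), so each of B2, B3, B4 really asserts an \emph{equality} of the corresponding sets of orderings. The engine of the forward implications is a structural description of the admissible graphs, which I would isolate as a lemma: \emph{a connected graph is $\{$pan, diamond$\}$-free if and only if it is a tree, a cycle, or a complete bipartite graph.}

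I would prove the lemma in two regimes. If $G$ contains a triangle $T$ and $|V(G)|>3$, connectivity gives a vertex $w\notin T$ with a neighbour in $T$, and the three cases (one, two, or three neighbours of $w$ in $T$) produce an induced paw, an induced diamond, or a $K_4$; hence the only connected $\{$pan, diamond$\}$-free graph with a triangle is $K_3$. Otherwise $G$ is triangle-free (so automatically diamond-free) and I only use pan-freeness: an acyclic $G$ is a tree, and if $G$ has a shortest induced cycle $C$ with $G\neq C$, every vertex outside $C$ adjacent to $C$ must have at least two neighbours on $C$ (a single neighbour yields a pan), after which a case analysis of these attachments forces $G$ to be complete bipartite. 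With the lemma in hand, the forward direction B1 $\Rightarrow$ B2, B3, B4 is verified family by family. The cleanest case is complete bipartite graphs: in any triple $a<_\sigma b<_\sigma c$ with $ac\in E$, $ab\notin E$ the vertices $a,b$ lie in one part and $c$ in the other, so \emph{every} neighbour of $b$ is a non-neighbour of $c$; since a search places at least one neighbour of $b$ before $b$, the witness $d$ with $dc\notin E$ demanded by \cref{thm:mns-ordering-characterization,thm:ldfs-ordering-characterization,thm:lbfs-ordering-characterization} is present (its required position following from the layer/stack structure). For trees the point is that the vertex $c$ of a putative bad triple is a cut vertex separating $a$ from $b$, so any connected prefix containing $a$ and $b$ already contains $c$, contradicting $b<_\sigma c$; for cycles one argues directly from the fact that every connected prefix is an arc.

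For the converse I argue contrapositively: a graph failing B1 contains an induced diamond or an induced $k$-pan, on which I exhibit orderings witnessing $\neg$B2, $\neg$B3, $\neg$B4. On the diamond with tips $t_1,t_2$ (non-adjacent) and hubs $h_1,h_2$, the ordering $t_1,h_1,t_2,h_2$ is a DFS that is not a LexDFS, while $h_1,t_1,t_2,h_2$ is at once a BFS that is not a LexBFS and a generic search that is not an MNS; in every case the failing triple is $(t_1,t_2,h_2)$, whose only candidate witness $h_1$ is killed by $h_1h_2\in E$. On a $k$-pan with cycle $v_1\cdots v_k$ and pendant $u\sim v_1$, the ordering $v_3,v_4,\ldots,v_k,v_1,u,v_2$ is a DFS that is not a LexDFS and a generic search that is not an MNS through the triple $(v_3,u,v_2)$, whose only candidate witness $v_1$ satisfies $v_1v_2\in E$. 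To transfer a local witness to all of $G$, for DFS and for generic search I would explore the obstruction $H$ entirely first: this is possible because $H$ is connected and the listed orderings are themselves valid searches of $H$, and since $H$ then occupies a prefix, the interval underlying the failing triple is unchanged and every outside neighbour of $b$ appears after $b$, so the missing witness stays missing.

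The hard part will be the BFS case of the converse, since BFS cannot be forced to finish $H$ before touching $V(G)\setminus H$. Here I would instead root the BFS inside $H$ and control only the vertices that precede $a$. For the diamond this is immediate: rooting at $h_1$ makes $a=t_1$ have the sole predecessor $h_1$. For a $k$-pan with $k\ge 4$ one has $v_1u\in E$, so $v_1$ must precede $a=v_3$ while $u$ follows; I would root the BFS at a vertex equidistant from $v_1$ and $v_3$ (the vertex antipodal to $v_2$, with minor parity bookkeeping for odd $k$), so that $v_1,v_3$ share a layer and $u,v_2$ the next, and then choose the queue order so that every neighbour of $u$ that is a non-neighbour of $v_2$ is enqueued after $v_3$. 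Guaranteeing that this is always achievable — that the absent LexBFS witness stays absent once the pan sits inside an arbitrary connected host $G$ — is, together with the complete-bipartite step of the lemma, where I expect the genuine work to lie.
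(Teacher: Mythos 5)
Your overall strategy mirrors the paper's: a structural characterisation of connected $\{$pan, diamond$\}$-free graphs, a family-by-family verification of the forward implications, and explicit bad orderings on an induced diamond or pan for the converse. However, there are two genuine gaps. First, your structural lemma is false as stated: you omit complete graphs. $K_n$ for $n\ge 4$ is connected and $\{$pan, diamond$\}$-free (a diamond needs a non-edge, a pan needs a degree-one vertex), yet it is neither a tree, nor a cycle, nor complete bipartite. The flaw is visible in your own proof sketch of the triangle case --- the third alternative, ``$w$ has three neighbours in $T$, producing a $K_4$,'' is not a contradiction, so the conclusion ``the only connected $\{$pan, diamond$\}$-free graph with a triangle is $K_3$'' does not follow. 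The paper instead invokes Olariu's theorem (a paw-free graph is triangle-free or complete multipartite) and then shows a diamond-free complete multipartite graph is either complete or complete bipartite; your list must include complete graphs, and the subsequent family verification must cover them (trivially, since on $K_n$ every ordering is every kind of search). A smaller but real flaw of the same kind: your tree argument, ``$c$ is a cut vertex separating $a$ from $b$,'' is not true in general (take the tree with edges $ac$, $ax$, $xb$ and the DFS ordering $a,x,b,c$); the correct observation is that in any connected search of a tree every non-root vertex has a \emph{unique} predecessor-neighbour, which forces $dc\notin E$ for the witness $d$.

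Second, and more seriously, the BFS direction of the converse --- that an induced $k$-pan in an arbitrary connected host $G$ yields a BFS ordering of $G$ that is not a LexBFS ordering --- is exactly the hard core of the theorem, and you explicitly leave it as ``where I expect the genuine work to lie.'' Your sketch (root the BFS at a vertex of the pan antipodal to $v_2$) is missing the idea that makes the paper's argument work: one must take a \emph{smallest} induced pan $P$ in $G$, so that minimality forces every relevant shortest path in $G$ between pan vertices either to lie on $P$ or to be disjoint from it (any violating path would create a smaller pan). Without this, distances in $G$ between pan vertices can be strictly shorter than distances along the pan, the layer structure you rely on collapses, and vertices of $G\setminus P$ entering early layers can supply the LexBFS witness you are trying to exclude. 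The paper's proof of this lemma is by far the longest in the section, with separate treatments of odd and even cycle length and a careful analysis of which labels the two candidate vertices carry at the critical step; as it stands your proposal asserts the statement rather than proving it.
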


\begin{restatable}{theorem}{thmMNSldfsLBFS}
\label{thm:thmMNSldfsLBFS}
Let $G$ be a connected graph. Then the following is equivalent:  
\begin{enumerate}[nosep, labelindent=\parindent,leftmargin=*, label = C\arabic*.]
    \item \label{it:P4C4free}Graph $G$ is $\{P_4,C_4\}$-free.
    \item \label{it:mnsLDFS}Every MNS ordering of $G$ is a LexDFS ordering of $G$.
    \item \label{it:mnsLBFS}Every MNS ordering of $G$ is a LexBFS ordering of $G$.
\end{enumerate}
\end{restatable}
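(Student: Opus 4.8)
Comparing \cref{thm:lbfs-ordering-characterization,thm:ldfs-ordering-characterization} with \cref{thm:mns-ordering-characterization}, every LexBFS‑ and every LexDFS‑ordering is already an MNS‑ordering: the required witness $d$ satisfies $d<_\sigma a$ (resp.\ $a<_\sigma d<_\sigma b$), which in both cases implies $d<_\sigma b$, and the extra condition $dc\notin E$ is shared by all three definitions. Hence \ref{it:mnsLDFS} is equivalent to ``$\mathrm{MNS}=\mathrm{LexDFS}$'' and \ref{it:mnsLBFS} to ``$\mathrm{MNS}=\mathrm{LexBFS}$''. I would prove \ref{it:P4C4free}$\Rightarrow$\ref{it:mnsLDFS},\ref{it:mnsLBFS} directly and the two converses by contraposition, i.e.\ from an induced $P_4$ or $C_4$ I would build MNS‑orderings that are not LexDFS (resp.\ not LexBFS). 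Recall that $\{P_4,C_4\}$-free graphs are exactly the trivially perfect graphs, and that the class is hereditary.

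\textbf{Forward direction (a local $4$-vertex lemma).}
Assume $G$ is $\{P_4,C_4\}$-free and fix an MNS‑ordering $\sigma$ with an applicable triple $a<_\sigma b<_\sigma c$, so $ac\in E$, $ab\notin E$; let $d$ be the MNS‑witness, with $d<_\sigma b$, $db\in E$, $dc\notin E$. The heart of the argument is to inspect the graph induced on $\{a,b,c,d\}$: the pairs $ac,db$ are edges, $ab,dc$ are non‑edges, and only $ad$ and $bc$ are undetermined. Of the four ways to resolve them, three are forbidden in a $\{P_4,C_4\}$-free graph: $ad,bc\in E$ gives the induced cycle $a\,d\,b\,c$; $ad\in E,\,bc\notin E$ gives the induced path $c\,a\,d\,b$; and $ad\notin E,\,bc\in E$ gives the induced path $a\,c\,b\,d$. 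Thus $ad,bc\notin E$, i.e.\ $\{a,b,c,d\}$ induces a $2K_2$. Now if $a<_\sigma d<_\sigma b$ then $d$ is itself a LexDFS‑witness; symmetrically, whenever the MNS‑witness already lies before $a$ it is a LexBFS‑witness. The remaining task is the ``$2K_2$ case'': when the only witness handed to us sits on the wrong side (before $a$ for LexDFS, or inside $(a,b)$ for LexBFS) I must still produce one on the correct side. I would resolve this by induction on $|V(G)|$, deleting the last vertex $z=\sigma(n)$: this keeps $\sigma$ an MNS‑ordering of the (still $\{P_4,C_4\}$-free) graph $G-z$, so by induction only triples with $c=z$ need attention, and for those I would combine the connectedness of every search prefix with the maximal‑label selection rule of MNS to locate a correctly placed witness.

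\textbf{Reverse direction (explicit bad orderings).}
The four‑vertex instances already show the phenomenon. On $C_4=w\,x\,y\,z$ the ``around‑the‑cycle'' order $w,x,y,z$ is MNS but fails \cref{thm:lbfs-ordering-characterization} at the triple $(w,y,z)$ (the witness $x$ lies after $w$, and nothing precedes $w$), while $w,x,z,y$ is MNS but fails \cref{thm:ldfs-ordering-characterization} at $(x,z,y)$; on $P_4=v_1v_2v_3v_4$ the orders $v_2,v_3,v_4,v_1$ and $v_2,v_3,v_1,v_4$ play the same two roles. Given a general $G$ with such an induced subgraph $H$, I would extend the relevant order of $H$ to a full MNS‑ordering of $G$ by starting the search at the designated first vertex and then scheduling the remaining vertices greedily according to the MNS rule, using the intermediate vertex adjacent to $b$ but not to $c$ to raise the label of $b$ above that of $c$ so that $b$ is still chosen before $c$. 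For non‑LexBFS it suffices that the triple's first vertex $a$ is the first vertex of $\sigma$, since then no earlier witness can exist; for non‑LexDFS I would additionally force $b$ to be selected immediately after $a$, making the open interval $(a,b)$ empty.

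\textbf{Where the difficulty lies.}
The clean part is the $4$-vertex case analysis above, which pins down the $\{P_4,C_4\}$-free obstruction exactly. I expect the real work to be twofold and of the same flavour in both directions: in the reverse direction, proving that the hand‑crafted local order genuinely extends to a valid MNS‑ordering of all of $G$ \emph{and} that no other vertex of $G$ smuggles a witness into the forbidden zone (before $a$, or strictly between $a$ and $b$); and dually, discharging the ``$2K_2$ case'' in the forward direction. These realizability/scheduling arguments, rather than the forbidden‑subgraph step, are the main obstacle.
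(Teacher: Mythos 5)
Your reverse direction (MNS orderings of an induced $P_4$ or $C_4$ that violate \cref{thm:lbfs-ordering-characterization}, resp.\ \cref{thm:ldfs-ordering-characterization}, extended to all of $G$ by deferring the offending vertex, whose label never becomes a proper subset of its competitor's) is essentially the paper's argument, and your opening move in the forward direction --- showing that a violating triple $a<_\sigma b<_\sigma c$ together with its MNS witness $d$ induces a $2K_2$ once $P_4$ and $C_4$ are excluded --- is exactly how the paper's \cref{lem:p4c4-lexbfs,lem:p4c4-lexdfs} begin.

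The gap is in how you finish the forward direction. Your plan is induction on $|V(G)|$, deleting $z=\sigma(n)$ and claiming that only triples with $c=z$ remain to be checked. That reduction is formally sound (the restriction of $\sigma$ to $G-z$ is still MNS, since every MNS witness precedes $b$ and hence survives), but it buys nothing: a triple $(a,b,z)$ exhibits the full difficulty --- the MNS witness can still land on the wrong side of $a$ (resp.\ inside $(a,b)$), the inductive hypothesis about $G-z$ says nothing about triples ending at $z$, and the subsidiary triples one is led to, such as $(a,d,z)$, again end at $z$, so the case is self-similar rather than smaller. The sentence ``combine the connectedness of every search prefix with the maximal-label selection rule of MNS to locate a correctly placed witness'' is where a proof is needed and none is given. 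The paper closes this gap with a different mechanism: it takes the $\sigma$-\emph{leftmost} violating triple (minimizing $\sigma(a)+\sigma(b)+\sigma(c)$), notes that every strictly earlier triple therefore \emph{does} satisfy the relevant lexicographic three-point condition, and applies that condition repeatedly --- first to $(a,d,c)$ to get $d_1<_\sigma a$, then to $(d_1,a,d)$ to get $d_2<_\sigma d_1$, and so on --- using $P_4$-freeness at each step to show $d_{i+1}d_i\notin E$ so that the process never stalls; finiteness of $G$ then gives the contradiction. Some descent of this kind (or a structural argument via universal vertices of connected trivially perfect graphs) has to replace your appeal to the selection rule before the forward direction can be considered proved.
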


From \cref{thmGENdfsBFS,thm:(L)DFS} we can immediately derive similar results for two additional pairs of graph search methods.

\begin{corollary}
\label{thmGENldfsLBFS}
Let $G$ be a connected graph. Then the following is equivalent: 
\begin{enumerate}[nosep, labelindent=\parindent,leftmargin=*]
    \item[A1.] Graph $G$ is $\{P_4, C_4$, paw, diamond$\}$-free.
    \item[A5.] \label{it:gen-LDFS}Every graph search of $G$ is a LDFS ordering of $G$.
    \item[A6.] \label{it:gen-LBFS}Every graph search of $G$ is a LBFS ordering of $G$.
 \end{enumerate}
\end{corollary}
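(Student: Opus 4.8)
The plan is to obtain the corollary entirely by chaining \cref{thmGENdfsBFS} and \cref{thm:(L)DFS} with the refinement relations of \cref{fig:graphsearchrelations2}, so that no fresh argument about search orderings is required. Two small facts do the work: the containments that every LexDFS ordering is a DFS ordering and every LexBFS ordering is a BFS ordering (read off the Hasse diagram), and the implication that $\{P_4,C_4,\text{paw},\text{diamond}\}$-freeness forces $\{\text{pan},\text{diamond}\}$-freeness. I will establish A1 $\Leftrightarrow$ A5 and A1 $\Leftrightarrow$ A6 separately; the two chains are mirror images under exchanging the depth and breadth variants.

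For the freeness implication, note that a $3$-pan is exactly the paw, while for every $k\ge 4$ a $k$-pan contains an induced $P_4$: writing the cycle as $c_1\cdots c_k$ with pendant $v$ adjacent only to $c_1$, the vertices $v,c_1,c_2,c_3$ induce a $P_4$, since $c_1c_3\notin E$ (they lie at distance $2$ on a cycle of length at least $4$) and $v$ has no neighbor other than $c_1$. Hence forbidding $P_4$ and the paw already forbids every pan, so A1 implies condition B1 of \cref{thm:(L)DFS}.

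For the forward directions assume A1. By \cref{thmGENdfsBFS} (A1 $\Rightarrow$ A2) every graph search of $G$ is a DFS ordering, and by the previous paragraph together with \cref{thm:(L)DFS} (B1 $\Rightarrow$ B2) every DFS ordering of $G$ is a LexDFS ordering; composing the two gives A5. Using instead \cref{thmGENdfsBFS} (A1 $\Rightarrow$ A3) and \cref{thm:(L)DFS} (B1 $\Rightarrow$ B3), the same composition with BFS and LexBFS in place of DFS and LexDFS yields A6.

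For the backward directions only the containments are needed: every LexDFS ordering is a DFS ordering, so A5 at once gives A2 (every graph search is a DFS ordering), and \cref{thmGENdfsBFS} (A2 $\Rightarrow$ A1) closes the cycle; symmetrically every LexBFS ordering is a BFS ordering, so A6 gives A3, and \cref{thmGENdfsBFS} (A3 $\Rightarrow$ A1) returns A1. The only step carrying genuine content is the induced-$P_4$ verification for pans of arbitrary length; everything else is bookkeeping with the inclusion diagram, so I anticipate no real obstacle.
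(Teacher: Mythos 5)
Your proposal is correct and is essentially the paper's own (implicit) argument: the corollary is obtained by chaining \cref{thmGENdfsBFS} with \cref{thm:(L)DFS} via the refinement relations LexDFS $\subseteq$ DFS and LexBFS $\subseteq$ BFS, together with the observation that $\{P_4,C_4,\text{paw},\text{diamond}\}$-freeness implies $\{\text{pan},\text{diamond}\}$-freeness. Your explicit verification that every $k$-pan with $k\ge 4$ contains an induced $P_4$ is a welcome detail the paper leaves unstated.
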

\section{Proof of \cref{thmGENdfsBFS}}\label{sec:P4-C4-pan-diamond}
The following lemma investigates the case when an input graph contains an induced subgraph from $\{P_4, C_4$, paw, diamond$\}$.

\begin{lemma}\label{lem:generic}
Suppose either of the following is true:
\begin{enumerate}[nosep, labelindent=\parindent,leftmargin=*]
    \item every graph search of $G$ is also a BFS, or
    \item every graph search of $G$ is also a DFS, or
    \item a vertex-order of $G$ is a BFS, if and only if it is a DFS.
\end{enumerate}
Then $G$ is a $\{P_4, C_4$, paw, diamond$\}$-free graph. 
\end{lemma}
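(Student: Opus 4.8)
The plan is to prove the contrapositive: assuming $G$ contains one of $P_4$, $C_4$, the paw, or the diamond as an induced subgraph $H$, I would show that each of the three listed properties fails. Negating property~1 means exhibiting a generic search of $G$ that is not a BFS; negating property~2 a generic search that is not a DFS; and negating property~3 a vertex ordering that is a BFS but not a DFS (breaking one direction of the equivalence). Since $H$ is induced, the adjacencies among its vertices are exactly those of $H$, which is what lets me control the relevant three-point conditions regardless of the rest of $G$.

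For the non-BFS witness (property~1) I would place the vertices of $H$ first, along a connectivity-preserving sequence that starts at a chosen \emph{anchor} vertex $a$ and later contains two vertices $b,c$ with $ac\in E$, $ab\notin E$ and $a<_\sigma b<_\sigma c$; the remaining vertices of $G$ are then appended one by one, each adjacent to the current prefix (possible because $G$ is connected). With $a$ placed first there is no $d<_\sigma a$ at all, so the witness $a<_\sigma b<_\sigma c$ violates the BFS three-point condition of \cref{thm:bfs-ordering-characterization}, while every prefix stays connected so the ordering is a legitimate generic search. Concretely, writing $P_4=1{-}2{-}3{-}4$, $C_4=1{-}2{-}3{-}4{-}1$, the paw as a triangle $1,2,3$ with pendant $4$ at $1$, and the diamond as $K_4$ minus the edge $34$, the anchored sequences $(3,2,1,4,\dots)$, $(1,2,3,4,\dots)$, $(2,1,4,3,\dots)$ and $(3,1,4,2,\dots)$ respectively do the job.

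For properties~2 and~3 I would produce a single ordering that is a genuine BFS of $G$ yet fails the DFS three-point condition of \cref{thm:dfs-ordering-characterization}; being a generic search it simultaneously refutes property~2, and being a BFS that is not a DFS it refutes property~3. Here I cannot simply list $H$ first, since a BFS must respect the distance layers from its root. Instead I would root an honest BFS of $G$ at a suitable vertex $r\in V(H)$ and exploit the freedom to break ties within a layer: pop a non-adjacent pair $a,b$ \emph{consecutively} and later a vertex $c$ with $ac\in E$. Then $a<_\sigma b<_\sigma c$, $ac\in E$, $ab\notin E$, and the gap between $a$ and $b$ is empty, so no $d$ with $a<_\sigma d<_\sigma b$ and $db\in E$ exists and the DFS condition is violated; yet $\sigma$ is a bona fide BFS ordering. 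Because $H$ is induced and $r\in V(H)$, the vertices $a,b,c$ lie in the first one or two layers with adjacencies prescribed by $H$: for $P_4$ take $r=2$ and $(a,b,c)=(3,1,4)$; for $C_4$ take $r=1$ and $(a,b,c)=(2,4,3)$; for the paw $r=1$, $(a,b,c)=(2,4,3)$; and for the diamond $r=1$, $(a,b,c)=(3,4,2)$.

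The main obstacle is exactly this last point: unlike the flexible generic searches, the BFS witness for property~3 must be globally consistent with the distance layering of the whole graph $G$, not just of $H$. The key observation that removes the difficulty is that a DFS violation needs only two consecutively popped, mutually non-adjacent vertices in one layer followed by a later neighbour of the first, and such a configuration can always be realised by within-layer tie-breaking once the root is chosen inside $H$; the positions and adjacencies of the participating vertices then depend only on the induced copy of $H$, so the argument is insensitive to how the rest of $G$ is attached. A short finite check over the four forbidden subgraphs, together with the observation that appended vertices never fall into the relevant empty gap, completes the proof.
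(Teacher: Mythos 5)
Your proof is correct. It follows the same contrapositive strategy as the paper --- exhibit explicit vertex orderings that violate the relevant three-point conditions on the induced copy of the forbidden subgraph --- but the execution differs in two respects. First, the paper avoids your four-way case analysis by observing that each of $P_4$, $C_4$, the paw and the diamond contains four vertices $a,b,c,d$ with $ab,bc,cd\in E$ and $ac\notin E$; all three witnesses are then built from this single labelling (a generic search beginning $(c,b,a,\dots)$ is not a BFS because of the triple $(c,a,d)$, and one beginning $(b,c,a,\dots)$ is not a DFS because $c$ and $a$ are consecutive). Second, and more substantively, for items 2 and 3 you construct a genuine BFS of the whole graph $G$ that fails the DFS condition, which forces you to reason about the distance layering of $G$, the placement of the root inside $H$, and within-layer tie-breaking; the paper instead refutes item 3 in the opposite direction, taking the generic search beginning $(c,b,a,\dots)$ and completing it depth-first, so the witness is a DFS that is not a BFS and no global layering argument is needed. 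Your route proves slightly more (a single BFS-but-not-DFS ordering kills items 2 and 3 simultaneously, and shows the failure already occurs among layered searches) at the cost of the extra care about layers; the paper's route is shorter because the prefix of a generic or depth-first search can be prescribed freely, whereas the prefix of a BFS cannot. Both arguments are sound.
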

\begin{figure}[t]
\centering
\begin{tikzpicture}[scale=0.7, vertex/.style={inner sep=2pt,draw, fill,circle}, ]
\begin{scope}
\node[vertex, label=below:$ a $] (a) at (0,0) {};
\node[vertex, label=below:$ b $] (b) at (1,0) {};
\node[vertex, label=below:$ c $] (c) at (2,0) {};
\node[vertex, label=below:$ d $] (d) at (3,0) {};
\draw[] (a) -- (b) -- (c) -- (d);
\end{scope}
\begin{scope}[xshift=4.5cm]
\node[vertex, label=below:$ a $] (a) at (0,0) {};
\node[vertex, label=below:$ b $] (b) at (2,0) {};
\node[vertex, label=above:$ d $] (d) at (0,1) {};
\node[vertex, label=above:$ c $] (c) at (2,1) {};
\draw[] (d) -- (a) -- (b) -- (c) -- (d);
\end{scope}
\begin{scope}[xshift=12cm]
\node[vertex, label=below:$ d $] (b) at (0,0) {};
\node[vertex, label=above:$ b $] (c) at (1,0.5) {};
\node[vertex, label=above:$ c $] (a) at (0,1) {};
\node[vertex, label=right:$ a $] (d) at (2,0.5) {};
\draw[] (c) -- (a) -- (b) -- (c) -- (d);
\end{scope}
\begin{scope}[xshift=8cm]
\node[vertex, label=above:$ a $] (a) at (1,1.25) {};
\node[vertex, label=below:$ b $] (b) at (0,0.5) {};
\node[vertex, label=below:$ d $] (d) at (2,0.5) {};
\node[vertex, label=below:$ c $] (c) at (1,-0.25) {};
\draw[]   (b) -- (c)--(d)  -- (a) -- (b) -- (d);
\end{scope}
\end{tikzpicture}
\caption{In the examples above, ordering $(c,b,a,d)$ is not BFS, while ordering $(b,c,a,d)$ is not DFS. In the two rightmost examples above, ordering $(c,b,a,d)$ is not MNS. }
\label{fig:BFSnotLBFS}
\end{figure}
\begin{proof}
Suppose that $G$ contains an induced copy of a graph from $\{P_4, C_4$, paw, diamond$\}$. In other words, $G$ admits a subgraph $H$, where  
$V(H)=\{a,b,c,d\}$ and $\{ab,bc,cd\}\subseteq E(G)$ and $ac\notin E(G)$.
We derive the negations for the three items from this claim.
\begin{enumerate}
    \item Consider any generic search order of $G$ starting with $(c,b,a,\dots)$. Observe that such a vertex-order violates the BFS search paradigm (see \cref{thm:bfs-ordering-characterization}) with the triplet $(c,a,d)$.
    \item Now consider any generic search order of $G$ starting with $(b,c,a,\dots)$. In this case observe that the prefix $(b,c,a)$ of any such vertex-ordering violates \cref{thm:dfs-ordering-characterization}.
    \item It is enough to find a vertex-ordering which is exactly of one among types \{BFS, DFS\}. To this end consider again any search order of $G$ starting with $(c,b,a)$, and continuing so that DFS search order is respected. Similarly as in the item (1) notice that this search again violates the BFS search paradigm (see \cref{thm:bfs-ordering-characterization}), 
    with the triplet $(c,a,d)$.
\end{enumerate}
\end{proof}
We proceed with the proof of the main claim of this section.
\thmGENdfsBFS*
\begin{proof}
By \cref{lem:generic} it is clear that 
\cref{it:gen4} follows independently from either 
\cref{it:gen1}, \ref{it:gen2}, or \ref{it:bfsDfs}
 
We now establish that $G$ is $\{P_4, C_4$, paw, diamond$\}$-free, if and only if it is a star, or a clique. The converse direction is trivial, as 
every star, as well as $K_3$, are $\{P_4, C_4$, paw, diamond$\}$-free. 
For the forward direction  assume that $G$ is a $\{P_4, C_4$, paw, diamond$\}$-free connected graph. We distinguish two cases:
\begin{enumerate}
    \item Graph $G$ is triangle-free. Since it is also 
    $\{P_4, C_4\}$-free, $G$ must be a tree of diameter at most two, which exactly corresponds with the family of stars.
    \item Maximal clique $C$ in $G$ is of size at least three. If $G$ itself is a clique we are done, so suppose that there exists an additional vertex $a\notin C$, such that $N(a)\cap C\neq \emptyset$. Let $b\in N(a)\cap C$ and let $c\in C$ be such that $ac\notin E(G)$ (such a vertex $c$ exists by the maximality of $C$). Finally, since the $C$ is of size at least three, let $d\in C\setminus \{b,c\}$ be an arbitrary remaining vertex of $C$. It remains to observe that $(a,b,c,d)$ induce a paw, or a diamond. 
\end{enumerate}
To conclude the proof, it remains to show that every generic graph search in a clique or a star is also (both) a BFS as well as DFS search. Since in the clique all vertex-orderings are isomorphic, we only consider the case of stars. 
However, observe that stars only admit two non-isomorphic generic vertex orderings, namely the one starting in the center, and the one starting in a leaf. Since both of those vertex-orderings are at the same time also BFS and DFS orders, this concludes the proof of the claim. 
\end{proof}

\iflong

\section{On the $\{$pan, diamond$\}$-free graphs}\label{sec:pan-diamond}

\subsection{Breadth First Search and Lexicographic Breadth First Search}

Graph search methods in general don't have the hereditary property. Let $G$ be a graph with a search ordering $\sigma$ of particular type, and let $H$ be an induced subgraph of $G$. It is not true that $\sigma^*$ obtained from $\sigma$ by deletion of vertices that are not in $H$ represents a search ordering of the same type of $H$, as can be seen in the following example. 

\begin{example}
Let $G$ be a cycle on $5$ vertices, and let us denote its vertices by $v_1,v_2,v_3,v_4,v_5$ in the cyclic order. It is not difficult to see that $\sigma=(v_1,v_2,v_5,v_3,v_4)$ is a BFS ordering of $G$. Let $H$ be a subgraph of $G$ obtained by deletion of vertex $v_5$, and let $\sigma^*$ be an ordering of vertices in $H$ obtained from $\sigma$ after deletion of $v_5$. Then $\sigma^*=(v_1,v_2,v_3,v_4)$ is not a valid BFS ordering of $H$. 
\end{example}

From the above it follows that it could happen that there is an ordering of a graph $H$ that is BFS and not LexBFS ordering, while in a graph $G$ containing $H$ as an induced subgraph it is not necessarily true. It means that the equivalence between BFS and LexBFS in $G$ does not imply the same equivalence or every induced subgraph of $G$. In the following example we can see that a valid LexBFS ordering of $G$ yields an ordering of its subgrsph $H$ that is BFS and not LexBFS. 

\begin{example}
Let $G$ be a graph from \cref{fig:remove-6-pan}. After removing the vertex $u$ from $G$ we get a $6$-pan $G'$. Observe that in $G'$ we can find a BFS ordering $\sigma^*=(v_1,v_2,v_6,v_3,v_5,v,v_4)$ that is not a valid LexBFS ordering. If $\sigma^*$ is a part of a valid BFS ordering $\sigma$ of $G$, then we must visit $u$ before visiting non-neighbors of $v_1$, and after visiting vertices $v_2$ and $v_6$. Then it follows that $\sigma=(v_1,v_2,v_6,u,v_3,v_5,v,v_4)$ and it represents a valid LexBFS ordering of $G$, so is not an example of ordering of $G$ that is BFS and not LexBFS. 
\end{example}
\begin{figure}[H]
\centering
\begin{tikzpicture}[vertex/.style={inner sep=2pt,draw, fill,circle},xscale=1.5, yscale=1.8]
\begin{scope}
\node[vertex, label=below:$ v_6 $] (v6) at (0,0) {};
\node[vertex, label=below:$ v_5 $] (v5) at (1,0) {};
\node[vertex, label=below:$ v_4 $] (v4) at (2,0) {};
\node[vertex, label=above:$ v_1 $] (v1) at (-0.5,1) {};
\node[vertex, label=above:$ u $] (u) at (0.5,1) {};
\node[vertex, label=above:$ v $] (v) at (1.5,1) {};
\node[vertex, label=above:$ v_3 $] (v3) at (2.5,1) {};
\node[vertex, label=above:$ v_2 $] (v2) at (1,2) {};
\draw[] (v3)--(v)--(u) --  (v1) -- (v2)--(v3)  -- (v4)--(v5) -- (v6) --(v1);
\end{scope}
\begin{scope}[xshift=4.5cm]
\node[vertex, label=below:$ v_6 $] (v6) at (0,0) {};
\node[vertex, label=below:$ v_5 $] (v5) at (1,0) {};
\node[vertex, label=below:$ v_4 $] (v4) at (2,0) {};
\node[vertex, label=above:$ v_1 $] (v1) at (-0.5,1) {};
\node[vertex, label=above:$ v $] (v) at (1.5,1) {};
\node[vertex, label=above:$ v_3 $] (v3) at (2.5,1) {};
\node[vertex, label=above:$ v_2 $] (v2) at (1,2) {};
\draw[] (v3)--(v);
\draw[] (v1) -- (v2)--(v3)  -- (v4)--(v5) -- (v6) --(v1);
\end{scope}
\end{tikzpicture}
\caption[The orderings of a graph and its induced subgraph.]{The ordering $\sigma=(1,2,6,u,3,5,v,4)$ is a valid LexBFS ordering of $G$ (left), while the ordering $\sigma^*=(1,2,6,3,5,v,4)$ is not a valid LexBFS ordering of $G-u$ (right).}
\label{fig:remove-6-pan}
\end{figure}

Despite both demotivating examples above, we identify certain graphs where the equivalence between BFS and LexBFS does not hold in any graph containing them as an induced subgraph.

\begin{lemma}\label{lemma:paw-diamond-bfs}
Let $G$ be a graph which contains a diamond or a pan as an induced subgraph. Then there is a BFS ordering of $G$ that is not a LexBFS ordering of $G$.
\end{lemma}
\begin{proof}
First assume that $G$ contains a paw or a diamond as an induced subgraph. We show that there is a BFS ordering of $G$ that is not a LexBFS ordering of $G$ 
The claim can be easily justified by giving a prefix of an order $\sigma$ that is a BFS order and not a LexBFS order of a graph containing a paw or a diamond.
Let $G$ be a graph and let $H$ be a paw graph, contained in $G$ as an induced subgraph. Using the same notation as in~\cref{fig:3BFSnotLBFS} (left) we can define the BFS ordering $\sigma_1$ of $G$ starting in $c$, with first four vertices in $\sigma$ being $c,a,d,b$, in that order. Similarly, if $H$ is a diamond contained in $G$ as induced subgraph, we can define the BFS ordering $\sigma$ of $G$ starting in $c$ and visiting consecutively vertices $b,d,a$ (\cref{fig:3BFSnotLBFS} right). In both cases $\sigma $ is a BFS ordering, since it starts with a vertex $c$ and visits its neighbors. 
Also, $\sigma$ cannot be a LexBFS ordering, since in both cases vertex $a$ has label $\{n, n-1\}$, while $d$ has a label $n$, so $a$ should appear before $d$, no matter how the rest of $\sigma$ is defined. 

Now consider the case when $G$ contains a pan bigger then a paw. So denote $P$ to be a smallest pan in $G$, let $k\ge 4$ be the length of its cycle. Denote vertices of $P$ by $\{v_0,v_1,\dots,v_k\}$ such that $v_k$ is a pendant vertex connected to $v_{\lfloor \frac{k}{2}\rfloor-1}$. 
For any integer $i\in \{1,\dots,\lfloor \frac{k}{2}\rfloor\}$ we first observe the following:
\begin{enumerate}
    \item We have that $\{v_i,v_{k-i}\}\subseteq N_G^i(v_0),\text{ and }v_n\in N_G^{\lfloor k/2\rfloor}(v_0)$.\label{it:panbfs}
    \item Shortest $(v_0v_i)$-path in $G$ is unique and lies in $P$. Similarly, shortest $(v_0v_{k-i})$-path in $G$ is unique and lies in $P$.\label{it:bfs-unique}
    \item Let $P'$ be any shortest path between $v_0$ and a vertex from $N_G^{\lfloor k/2\rfloor}(v_0)$. If $P'$ is not completely contained in $P$, then it does not intersect $P$ (except at endpoints).\label{it:short}
\end{enumerate}

Indeed, any path violating the above would give rise to a pan on less then $k+1$ vertices, contradicting the choice of $P$. 
We distinguish two cases depending on the parity of $k$.

\paragraph{The case where $k$ is odd.}
First observe that for any $i\in \{1,\dots,k-1\}$
the shortest path between $v_0$ and $v_i$ is lying within $P$ and is unique in $G$. 
This is true as an existence of any different shortest path would give rise to a pan smaller then $P$.

Now consider a BFS vertex-ordering $\alpha$ starting at $v_0$, where the first vertex we choose at the distance $i$ from $v_0$ is 
$v_{k-i}$, for any $i\in \{1,2,\dots, (k-1)/2\}$. 
This is always possible as $(v_0,v_{k-1},v_{k-2},\dots,v_{(k+1)/2})$ is a path in $G$.
Moreover, we prioritise choosing a vertex $v_n$ as soon as possible.
By \cref{it:panbfs} we recall that
$\{a_{(k-1)/2},a_{(k+1)/2},a_k\}\subseteq N^{(k-1)/2}(a_0)$.
We next claim that $v_k<_{\alpha}v_{(k-1)/2}$. 
Indeed, the unique shortest path between $v_0$ and $v_{(k-1)/2}$ in $G$ goes through $v_{(k-3)/2}$, which is at the same time adjacent to $v_k$. It is hence always possible to select a vertex $v_k$ before $v_{(k-1)/2}$ in $\alpha$.

Now observe that $v_{(k+1)/2}<_{\alpha}v_k<_{\alpha}v_{(k-1)/2}$, where
$v_{k+1/2}v_{k-1/2}\in E(G)$ while
$v_{(k+1)/2}v_k\notin E(G)$.  
\cref{thm:lbfs-ordering-characterization} hence implies that there exists another vertex $x<_{\alpha}v_{(k+1)/2}$ such that $xv_k\in E(G)$
while $xv_{(k-1)/2}\notin E(G)$.
To this end recall that the first vertex we chose 
from the set $N_G^{(k-1)/2}(v_0)$ was $v_{(k+1)/2}$, so $x<_{\alpha}v_{(k+1)/2}$ implies that 
$d_G(v_0,x)\le (k-3)/2$. 
Let $Q$ be a shortest path between $v_0$ and $x$.
We conclude this case by identifying a pan smaller then $P$,
inside of a graph induced by vertices $\{v_{k-1}, v_k\}\cup\{v_1,\dots,v_{(k-1)/2}\}\cup Q$.

\paragraph{The case where $k$ is even.}
Denote by $\alpha$ a BFS vertex-ordering which starts at $v_0$, and where,
among the eligible vertices, we prioritise vertices from $P$.
As an additional tie-breaking rule we select the vertex from $P$ with the minimal index,
until we have used all vertices at distance at most $k/2-1$ from $v_0$.
Immediately after vertices from $N^{k/2-1}(v_0)$, we append $v_n$, and then $v_{k/2}$ to $\alpha$. 

In particular, 
by  construction of $\alpha$ and by \cref{it:panbfs,it:bfs-unique,it:short} the sequence $\alpha$ must contain the following subsequence
\begin{align}
    v_0 <_\alpha 
    v_1 <_\alpha 
    v_{k-1} <_\alpha 
    v_2 <_\alpha 
    v_{k-2} <_\alpha 
    \dots <_\alpha 
    v_{(k/2)-1} <_\alpha 
    v_{(k/2)+1} <_\alpha 
    v_k <_\alpha 
    v_{k/2}.
\end{align}
Now consider the labels of $v_k$ and $v_{k/2}$ at the moment right before $v_k$ is chosen. Clearly the latter contains the index of vertex $v_{(k/2)+1}$ while the former does not, and clearly both contain the index of vertex $v_{(k/2)-1}$. 
This implies that $\alpha$ is not a LexBFS order as it should chose the vertex $v_{k/2}$ instead of $v_k$.
Here we note that the labels of $v_k$ and $v_{k/2}$ might contain additional entries in its label, however those cannot affect the lexicographic priority of  $v_{k/2}$, as the label of  $v_{(k/2)+1}$ preceeds them all by the definition of $\alpha$, and by \cref{it:short}.
This concludes the proof of the claim.



\end{proof}

\begin{figure}[H]
\centering
\begin{tikzpicture}[vertex/.style={inner sep=2pt,draw, fill,circle}, ]
\begin{scope}
\node[vertex, label=below:$ b $] (b) at (0,0) {};
\node[vertex, label=above:$ c $] (c) at (1,0.5) {};
\node[vertex, label=above:$ a $] (a) at (0,1) {};
\node[vertex, label=right:$ d $] (d) at (2,0.5) {};
\draw[] (c) -- (a) -- (b) -- (c) -- (d);
\node[] (s) at (1,-1.2) {$\sigma_1=(c,a,d,b)$};
\node[] (s) at (1,-1.8) {$\sigma_2=(b,c,d,a)$};
\end{scope}
\begin{scope}[xshift=6cm]
\node[vertex, label=above:$ a $] (a) at (1,1.25) {};
\node[vertex, label=below:$ b $] (b) at (0,0.5) {};
\node[vertex, label=below:$ d $] (d) at (2,0.5) {};
\node[vertex, label=below:$ c $] (c) at (1,-0.25) {};
\node[] (s) at (1,-1.2) {$\sigma_1=(a,b,d,c)$};
\node[] (s) at (1,-1.8) {$\sigma_2=(b,c,d,a)$};
\draw[] (a) --  (b) -- (c)--(d)  -- (a)--(c);
\end{scope}
\end{tikzpicture}
\caption[Some orderings of a paw and a diamond.]{A paw (left) and a diamond (right). The corresponding search orderings $\sigma_1$ ($\sigma_2$)  are BFS and not LexBFS orderings (DFS and not LexDFS orderings, resp.).
}
\label{fig:3BFSnotLBFS}
\end{figure}

\begin{lemma}
\label{lemma:characterization}
If a connected graph $G$ does not contain a diamond, or a pan as an induced subgraph, then $G$ is either acyclic, or a cycle on at least $4$ vertices, or a complete graph, or a complete bipartite graph.
\end{lemma}

\begin{proof}
Let $G$ be a graph that does not contain a diamond, or a pan as induced subgraph. From~\cref{thm:olariu} it follows that $G$ is either a complete multipartite graph, or a triangle-free graph. 

Let first $G$ be a complete multipartite graph, with partition classes $S_1,\dots, S_k$. If all partition classes of $G$ have one vertex, then $G$ is a complete graph, so we may assume without loss of generality that $|S_1|\ge 2$. Let $x,y\in S_1$. If there are exactly two partition classes of $G$, then $G$ is a complete bipartite graph. Assume that there are at least three partition classes in $G$, and let $z\in S_2$, $w\in S_3$. Then the vertices $\{x,y,z,w\}$ form a diamond in $G$; a contradiction. 

Let now $G$ be a triangle-free graph. If $G$ does not contain any cycle, then $G$ is a tree, and we are done. 
Assume first that $G$ contains a cycle of length at least five and let $C$ be such a cycle in $G$. If $G=C$, we are done, so assume that there is a vertex $v$ in $V(G)\setminus V(C)$ having a neighbor in $C$. If $v$ has exactly one neighbor in $C$, then $V(C)\cup \{v\}$ induce a cycle with pendant vertex in $G$, so $v$ has at least two neighbors in $C$. We know that $G$ is triangle-free, so no two consecutive vertices of $C$ are adjacent to $v$. Let $v_i,v_j\in C$, $i<j$ be neighbors of $v$ such that $|j-i|=j-i$ is minimal. Then $vv_{i-1}\notin E(G)$ and vertices $v, v_{i-1}, v_i,v_{i+1},\dots, v_j$ form a cycle with pendant vertex, unless it holds that $v_{i-1}v_j\in E(G)$, that is, unless the vertices $v_{i-1}$ and $v_j $ are consecutive in $C$, meaning that the distance between $v_i$ and $v_j$ in $C$ is equal to two and that $C$ is a cycle on four vertices. Our assumption was that $C$ is a cycle on at least $5$ vertices, so we have a contradiction. It follows that the vertex $v$ does not exist and $G=C$.

Assume now that any cycle in $G$ contains exactly four vertices, and let $C$ be such a cycle, with vertices $v_1,v_2,v_3,v_4$ in consecutive order. 
We know that $G$ has no odd cycles, so $G$ is bipartite graph. Also, we know that $C$ is a complete bipartite graph. Let $F$ be a subgraph of $F$ that contains $C$ such that $F$ is maximal complete bipartite subgraph of $G$, and let $(A,B)$ be a partition of $F$. Without loss of generality we may assume that $v_1,v_3\in A$ and $v_2,v_4\in B$. If $G=F$, then $G$ is a complete bipartite graph, and we are done, so assume there is a vertex $v\in V(G)\setminus V(F)$. A graph $G$ is connected, so $v$ has a neighbor in $F$. Let without loss of generality $u\in A$ be a neighbor of $v$. We know by definition of $F$ that $u$ is adjacent to all vertices in $B$, so it cannot be that $v$ has a neighbor in $B$, since otherwise that neighbor together with vertices $u$ and $v$ would form a triangle in $G$. It follows that $(A, B\cup \{v\})$ is a partition of a bipartite graph, and from the maximality of $F$ it follows that $v$ has a non-neighbor in $A$. Let $x\in A$ be a non-neighbor of $u$. (Observe that it can happen that $\{x,u\}\cap \{v_1,v_3\}\neq \emptyset$.) Taking the vertices $\{x,u,v_2,v_4,v\}$ we get the forbidden $C_4$ with a pendant edge; a contradiction. It follows that $G=F$ and thus $G$ is a complete bipartite graph, as we wanted to show.
\end{proof}
%


\begin{lemma}\label{lem:(L)BFS-cycles-forests-cliques}
In the following graph classes every BFS ordering is a LexBFS ordering. 
\begin{enumerate}[nosep, label = \roman*)]
\item cycles
\item forests
\item complete graphs
\item complete bipartite graphs
\end{enumerate}
\end{lemma}
\begin{proof}
We prove the lemma for each case separately.

\begin{enumerate}[nosep, label = \roman*)]
\item Assume for contradiction this is not true, and let $G$ be a cycle with ordering $\sigma$ that is a BFS ordering and not a LexBFS ordering. By~\cref{thm:lbfs-ordering-characterization} it follows that there are vertices $a<_\sigma b<_\sigma c$ such that $ab\notin E(G)$, $ac\in E(G)$ and for every $d'<_\sigma a$ it holds that either $d'b\notin E(G)$, or $d'c\in E(G)$. Similarly, from the~\ref{thm:bfs-ordering-characterization} it follows that there is a vertex $d<_\sigma $ such that $db\in E(G)$. Then it must be that $d'c\in E(G)$, so $b$ and $c$ are both neighbors of $d'$ in $G$. We know that $G$ is a cycle, so every vertex in $G$ is of degree $2$, and thus $b$ and $c$ are the only neighbors of $d'$ in $G$. Since $\sigma$ is a BFS ordering, at every step it visits a neighbor of some already visited vertex, so it must be that $\sigma(d')=1$. Then the neighbors of $d'$ are visited before non-neighbors of $d'$, so vertices $b$ and $c$ must be visited before $a$ in the BFS ordering $\sigma$. This is a contradiction with the definition of $a,b,c$, so such an ordering $\sigma$ does not exist, and every BFS ordering of $G$ is also a LexBFS ordering of $G$.
\item Let $\sigma $ be a BFS ordering of a forest graph $G$, and let $\sigma(v)=1$. If we do a LexBFS on $G$ starting in $v$, at every step of iteration all the unvisited vertices have a label consisting just of one number - a number belonging to the parent of the unvisited vertex. Thus, the label of every vertex consists just of a number belonging to the first visited neighbor. It means that putting the vertices in a queue in BFS is exactly the same as ordering vertices with respect to the lexicographic maximal label, so $\sigma$ is a LexBFS of $G$. 

\item If $v$ is arbitrary vertex of a complete graph $G$, once the vertex $v$ is visited, every unvisited vertex in $G$ gets a label from $v$. It means that at iteration step of LexBFS all the unvisited vertices in $G$ have the same label, so we can choose any among them. Any ordering of vertices of a complete graph is BFS and LexBFS ordering. 
\item Let $G$ be a complete bipartite graph with partition classes $A$ and $B$, and let $\sigma$ be a BFS ordering of $G$. Assume without loss of generality that $v\in A$ is a first vertex in ordering $\sigma.$ BFS is a layered search on $G$, so after visiting $v$ we visit all the neighbors of $v$ in $B$. After that, we visit all the vertices that are on distance $2$ from vertex $v$ in $G$, and so on, until we visit all the vertices in $G$. This search is also a LexBFS search, since at every step of LexBFS the vertices of $G$ in the same partition of $V(G)$ all have the same labels, and we can choose any among them. Also, all vertices that are on some distance $i$ from $v$ belong either to $A$ or $B$, so $\sigma$ is a LexBFS ordering of $G$. 
\end{enumerate}
\end{proof}

\cref{lemma:paw-diamond-bfs,lemma:characterization,lem:(L)BFS-cycles-forests-cliques} imply the following.

\begin{corollary}\label{cor:(L)BFS}
For any graph $G$, the following is equivalent: 
\begin{enumerate}[nosep, label = \roman*)]
    \item Every BFS ordering of $G$ is a LexBFS ordering of $G$.
    \item Graph $G$ is $\{$pan, diamond$\}$-free.
\end{enumerate}
\end{corollary}

\subsection{Depth First Search and Lexicographic Depth First Search}

In this section we prove \cref{thm:(L)DFS}.
In the process we utilise the characterization of (Lex)DFS orderings (so-called ``point conditions'') described in \cref{thm:ldfs-ordering-characterization}. 
We start by giving sufficient condition regarding when DFS and LexDFS are not equivalent.
\begin{lemma}
If a graph $G$ contains a pan or a diamond as an induced graph, 
then there is a DFS ordering of $G$ that is not a LexDFS ordering of $G$.
\end{lemma}
\begin{proof}
The claim can be easily justified by giving a prefix of an order $\sigma$ that is a DFS order and not a LexDFS order of a graph containing a pan or a diamond.
First consider the case when $G$ contains a paw as an induced subgraph. Using the same notation as in~\cref{fig:3BFSnotLBFS} (left) we can define the DFS ordering $\sigma_2$ of $G$ starting in $c$, with first four vertices in $\sigma_2$ being $b,c,d,a$, in that order. 

Similarly, if $H$ is a diamond contained in $G$, we can define the DFS ordering $\sigma_2$ of $G$ having the same prefix: starting in $b$ and visiting consecutively vertices $c,d,a$ (\cref{fig:3BFSnotLBFS} right). In both cases $\sigma_2 $ is a DFS ordering, since it starts with a vertex $b$ and traverse the graph as deep as possible.
Also, $\sigma_2$ cannot be a LexDFS ordering, since in both cases the vertex $a$ has a label $\{21\}$, while $d$ has a label $\{1\}$, so $a$ should appear before $d$, no matter how the rest of $\sigma$ is defined. 
\end{proof}

As we already know, paw is defined as $3$-pan. In the following lemma we give a result showing that the eqivalence between DFS and LexDFS in $G$ implies that $G$ does not contain any pan as induced subgraph. This result generalizes the part of previous lemma that considered the existence of a paw graph in $G$. 

\begin{lemma}
If a graph $G$ contains a pan as an induced subgraph, 
then there is a DFS ordering of $G$ that is not a LexDFS ordering of $G$ (that is, DFS and LexDFS are not equivalent in $G$).
\end{lemma}
\begin{proof}
The claim can be easily justified by giving a prefix of an order $\sigma$ that is a DFS order and not a LexDFS order of a graph containing a pan.
Let $G$ be a graph and let $H$ be a pan, contained in $G$ as an induced subgraph. Let the vertices of $H$ be denoted by $v_1,\dots, v_n, v$, where vertices $v_1,v_2,\dots, v_n$ form a cycle in this order, and $v$ is a vertex of degree $1$, adjacent to $v_{n-1}$. 
We can define the DFS ordering $\sigma$ of $G$ starting in $v_1$, with first $n$ vertices in $\sigma$ being $v_1,v_2,\dots, v_{n-2}, v_{n-1},v$, in that order. It is clear that $\sigma $ is a DFS order, since it has a prefix that is a path, and continues traversing the graph $G$ using DFS. At the same time we have that $\sigma$ is not a LexDFS ordering of $G$. We know that the vertex $v_n$ appears in $\sigma$ after all other vertices from $H$. It follows that $v_1<_\sigma v<_\sigma v_n$ with $v_1v_n\in E(G)$ and $v_1v\notin E(G)$. By~\cref{thm:ldfs-ordering-characterization} it follows that there is a vertex $v_i$: $v_1<_\sigma v_i<_\sigma v$, $v_iv\in E(G)$ and $v_iv_n\notin E(G)$. But among the vertices that are visited before $v$ in $\sigma$ there is just a vertex $v_{n-1}$ that is adjacent to $v$. We have that $v_{n-1}v_n\in E(G)$, so the condition of~\cref{thm:ldfs-ordering-characterization} is not fulfilled, and $\sigma $ is not a LexDFS ordering of $G$.  
\end{proof}
It turns out that the equivalence between DFS and LexDFS in a graph $G$ implies that $G$ is a $\{$pan, diamond$\}$-free graph. In the following lemma we show that this is also sufficient. 

\begin{lemma}
If a graph $G$ does not contain a diamond, or a pan as an induced subgraph, then DFS and LexDFS are equivalent in $G$.
\end{lemma}
\begin{proof}

Let ${\cal G}$ be a class of $\{$diamond, pan$\}$-free graphs. We want to prove that DFS and LexDFS are equivalent in $G$. Assume for contradiction this is not the case, and let $G$ be a graph and $\sigma $ an ordering of $G$ that is DFS but not LexDFS ordering. 

Since $\sigma$ is a DFS ordering, it satisfies the characterization given in~\cref{thm:dfs-ordering-characterization}: if $ a <_\sigma b <_\sigma c$ and $ac \in E$ and $ab \notin E$, then there exists a vertex $d$ such that $a<_\sigma d <_\sigma b$ and $db \in E$.
From~\cref{thm:ldfs-ordering-characterization} 
it follows that there exist vertices $a,b,c$ in $G$ such that $a<_\sigma b<_\sigma c$, $ab\notin E(G), ac\in E(G)$ and for all vertices $d$ satisfying $a<_\sigma d<_\sigma b$ it holds that either $dc\in E(G)$, or $db\notin E(G)$.

Let $a<_\sigma b<_\sigma c$ be leftmost vertices that don't satisfy the characterization of LexDFS ordering $\sigma$ given in~\cref{thm:ldfs-ordering-characterization}. We know that $\sigma$ is DFS ordering of $G$, so there exists a vertex $d_1$ such that $a<_\sigma d_1<_\sigma b$ and $d_1b\in E(G)$. Then it follows that $d_1c\in E(G)$. 
Also, we have that $ad_1\notin E(G)$ and $bc\notin E(G)$, since otherwise we get a pan or a diamond. 

Consider now the vertices $a,d_1,c$. It holds that $a<_\sigma d_1<_\sigma c$, with $ac\in E(G)$ and $ad_1\notin E(G)$. These vertices satisfy the LexDFS ordering characterization, so there exists a vertex $d_2$ such that $a<_\sigma d_2<_\sigma d_1$, and $d_2d_1\in E(G)$, $d_2c\notin E(G)$. If $d_2b\in E(G)$, then the vertices $d_2,d_1,b,c$ form a $3$-pan. If $ad_2\in E(G)$, then the vertices $a,d_2,d_1,b,c$ form a $4$-pan. Hence, it follows that $ad_2\notin E(G)$ and $bd_2\notin E(G)$. Now we can continue this process by considering the vertices $a,d_2,c$ and apply the characterization of the LexDFS ordering. Let $d_1,d_2,\dots, d_k$ be a sequence of vertices defined in the following way: given a triple $a<_\sigma d_i<_\sigma c$ such that $ad_i\notin E(G)$, $ac\in E(G)$, $d_{i+1}$ is a vertex satisfying the conditions: $a<_\sigma d_{i+1}<_\sigma d_i$, $d_{i+1}d_i\in E(G)$, and $d_{i+1}c\notin E(G)$. Let $k$ be the maximum number of such vertices. We know that the number of vertices between $a$ and $c$ is finite, so $k$ is a finite number. We show the following claims:
\begin{enumerate}[nosep, label = \roman*)]
\item $d_id_{i+1}\in E(G)$, for all $i\in \{2,\dots, k\}$ - this is true by definition of vertices $d_i$, 
\item $d_{i+1}c\notin E(G)$, for all $i\in \{2,\dots, k\}$ - this is true by definition of vertices $d_i$,
\item $d_{i}a\notin E(G)$, for all $i\in \{1,2,\dots, k-1\}$ - this is true by definition of vertices $d_i$,
\item $d_ka\in E(G)$ - if this would not be true, then we can continue the process, and $d_k$ is not the last vertex in this sequence
\item $d_id_j\notin E(G)$, for all $i,j\in \{1,\dots, k\}$, such that $|i-j|\ge 2$ - Assume the opposite: let $d_i$ and $d_j$ be adjacent vertices with $|i-j|\ge 2$, such that $|i-j|$ is minimum and among all pairs $i,j$ satisfying this minimality condition, let $i,j$ be the smallest possible (equivalently, the right-most in the ordering $\sigma $). Without loss of generality we may assume that $j<i$. From the minimality of $|i-j|$ it follows that the vertices $d_j,d_{j+1}, \dots, d_{i}$ form an induced cycle in $G$. If $j=1$, then the vertices $\{d_j,d_{j+1}, \dots, d_{i}, c\}$ form a pan in $G$. Similarly, if $i=k$, then the vertices $\{d_j,d_{j+1}, \dots, d_{i}, a\}$ form a pan in $G$. It follows that $j>1$ and $i<k$. Consider now the vertex $d_{j-1}$. By the way we chose $i$ and $j$ it follows that $d_{j-1}d_\ell\notin E(G)$ for all $\ell\in\{d_{j+1},\dots, d_{i-1}\}$.
If $d_{j-1}d_i\notin E(G)$, then the vertices $\{d_{j-1},d_j,d_{j+1}, \dots, d_{i}\}$ form a pan in $G$. 
If $d_{j-1}d_i\in E(G)$, we consider two cases. 
First, if $i-j=2$, then the vertices $\{d_i, d_{i-1}, d_{i-2}=d_j, d_{i-3}=d_{j-1}\}$ form a diamond. 
Second, if $i-j>2$, then the vertices $\{d_{j-1}, d_j, d_i,d_{i-1}\}$ form a $3$-pan. 
In both cases we get a contradiction with the definition of $G$, meaning that such an edge $d_id_j$ cannot exists in $G$. 
\item $d_ib\notin E(G)$, for all $i\in\{2,\dots, k\}$ - Assume for contradiction that $j$ is a minimal value in $\{2,\dots, k\}$ such that $d_jb\in E(G)$. Then the vertices $\{d_1,\dots, d_j, b,c \}$ form a pan in $G$; a contradiction. 
\end{enumerate}

Consider now the vertices $\{d_1,\dots, d_k, a, c, b\}$. From the above claims it follows that they form a pan, where $b$ is a vertex of degree one. This is a contradiction with the definition of $G$. It follows that the vertices $d_1,\dots, d_k$ defined as above cannot exist, so $\sigma $ is a LexDFS ordering of $G$, as we wanted to show. 
\end{proof}
From the statements above the proof of main claim of this section follows immediately. 

\begin{corollary}\label{cor:(L)DFS}
For any graph $G$, the following is equivalent: 
\begin{enumerate}[nosep, label = \roman*)]
    \item Every DFS ordering of $G$ is a LexDFS ordering of $G$.
    \item Graph $G$ is $\{$pan, diamond$\}$-free.
\end{enumerate}
\end{corollary}

\subsection{Generic graph search and Maximal Neighbourhood Search}
\begin{lemma}\label{lem:MNS-cycles-forests-cliques}
In the following graph classes every graph search is an MNS ordering. 
\begin{enumerate}[nosep, label = \roman*)]
\item trees
\item cycles
\item complete graphs
\item complete bipartite graphs
\end{enumerate}
\end{lemma}

\begin{proof}
We prove each statement separately.
\begin{enumerate}[nosep, label = \roman*)]
\item Let $G$ be a tree, and fix any generic vertex-ordering $\alpha$. Since every non-starting vertex must have an $\alpha$-smaller vertex, the only way to violate MNS order paradigm would be to select a candidate with label which is a strict subset of a label by another candidate. 
Since in case of trees all candidates at all steps have the label of length exactly one, such violation cannot happen. 
\item In case of cycles, at every non-last step we are in a similar situation as in the case of trees -- every candidate vertex contains a label of length one, and is hence safe to choose. 
The only exception to this is the last vertex which will have a label of length two. Since it is the only remaining vertex, this will not violate MNS paradigm as well.
\item If $G$ is complete, then every ordering of vertices is equivalent, hence it is always generic search, as well as MNS search order.
\item Suppose $G$ is a complete bipartite graph on bipartitions $A,B$, and let $\alpha$ be any generic vertex-ordering. Furthermore let $a,b,c$ be arbitrary vertices such that $a<_alpha b<_alpha c$, and $ac\in E(G)$ while $ab\notin E(G)$, and wlog. assume $a\in A$. To satisfy \cref{thm:mns-ordering-characterization} it is enough to find a vertex $d<_\alpha b$ such that $db\in E(G)$ and $dc]\notin E(G)$. This is clearly true if $a$ is the starting vertex of $\alpha$, as in this case we fix $d$ to be its immediate successor and observe that $d,c\in B$ while $b\in A$.

But in the other case, when $a$ is not the start of $\alpha$, then set $d$ to be any of its neighbors such that $d<_\alpha a$. Such a neighbor exists as $\alpha$ is a generic search order. Again observe that $d,c\in B$ while $b\in A$, which concludes the proof of the claim.
\end{enumerate}
\end{proof}

\begin{lemma}
If a graph $G$ contains a pan, or a diamond as an induced subgraph, 
then there is a generic ordering of $G$ that is not a MNS ordering of $G$ (that is, generic search and MNS are not equivalent in $G$).
\end{lemma}

\begin{proof}
The claim can be easily justified by giving a prefix of an order $\sigma$ that is a search order and not an MNS order of a graph containing a pan or a diamond.
First consider the case when $G$ contains a diamond as an induced subgraph. Using the same notation as in~\cref{fig:3BFSnotLBFS} (right) we can define the search ordering $\sigma_2$ of $G$ starting in $c$, with first four vertices in $\sigma_2$ being $b,c,d,a$, in that order. It is clear that vertices $(b,d,a)$ violate the characterisation from \cref{thm:mns-ordering-characterization}.

Similarly, if $G$ contains an induced pan on vertices $v_0,v_1,\dots,v_k$ where $v_0$ and $v_2$ are of degrees 1 and 3, respectively. Now construct a search order which starts with 
$$
(v_2,v_3,\dots,v_k,v_0,v_1,\dots).
$$

Again, it is clear that the triplet $(b,d,a)$ violates the MNS search paradigm, which concludes the proof of the claim.
\end{proof}

From the statements above the proof of main claim of this section follows immediately. 

\begin{corollary}\label{cor:g-MNS}
For any graph $G$, the following is equivalent: 
\begin{enumerate}[nosep, label = \roman*)]
    \item Every graph search ordering of $G$ is a MNS ordering of $G$.
    \item Graph $G$ is $\{$pan, diamond$\}$-free.
\end{enumerate}
\end{corollary}

The above corollary, together with \cref{cor:(L)BFS,cor:(L)DFS} give the proof of \cref{thm:(L)DFS}.

\section{On the $\{P_4,C_4\}$-free graphs}\label{sec:P4-C4}

In this section we prove \cref{thm:thmMNSldfsLBFS}, that is, we characterize graphs for which it holds that every MNS ordering is also a LexBFS ordering, and graphs for which it holds that every MNS ordering is also a LexDFS ordering. As stated in \cref{thm:thmMNSldfsLBFS} it turns out that in both cases the same graphs are forbidden as induced subgraphs, as the following lemmas show. 
\begin{theorem}
If every MNS ordering of $G$ is also a LexBFS ordering of $G$, then $G$ is a $\{P_4,C_4\}$-free graph.
\end{theorem}
\begin{proof}
Let $G$ be a graph in which every MNS ordering is LexBFS ordering. Assume for contradiction that $G$ is not $\{P_4,C_4\}$-free graph.

Assume first that $G$ contains an induced $P_4$, and let $v_1,v_2,v_3,v_4$ be vertices of $P_4$. Let $\sigma$ be a MNS ordering of vertices in $G$ with $\sigma(1)=v_2$. Then any neighbor of $v_2$ can be selected next, so let $\sigma(2)=v_3$.  Then the label of vertex $v_4$ contains a vertex $v_3$, while a label of vertex $v_1$ does not contain it, meaning that the label of a vertex $v_4$ will never be a proper subset of a label of a vertex $v_1$, and we can select vertex $v_4$ before vertex $v_1$ in $\sigma$. At the same time once the vertices $v_2$ and $v_3$ are selected, from the definition of LexBFS it follows that all the neighbors of $v_1$ must be selected before its non-neighbors, so if $\sigma$ is a LexBFS ordering of $G$, it must be that $v_1<_\sigma v_4$; a contradiction.
If $G$ contains an induced $C_4$, the same reasoning holds, so we get a contradiction in any case and it follows that $G$ is a $\{P_4,C_4\}$-free graph.
\end{proof}

\begin{lemma}
If every MNS ordering of $G$ is also a LexDFS ordering of $G$, then $G$ is a $\{P_4,C_4\}$-free graph.
\end{lemma}
\begin{proof}
Let $G$ be a graph in which every MNS ordering is LexDFS ordering. Assume for contradiction that $G$ is not $\{P_4,C_4\}$-free graph.

Assume first that $G$ contains an induced $P_4$, and let $v_1,v_2,v_3,v_4$ be vertices of $P_4$. Let $\sigma$ be a MNS ordering of vertices in $G$ with $\sigma(1)=v_2$. Then any neighbor of $v_2$ can be selected next, so let $\sigma(2)=v_3$.  Then a label of vertex $v_4$ contains a vertex $v_3$, while a label of vertex $v_1$ does not contain it, meaning that the label of a vertex $v_4$ will never be a proper subset of a label of a vertex $v_1$, and we can select vertex $v_4$ before vertex $v_1$ in $\sigma$. At the same time once the vertices $v_2$ and $v_3$ are selected, from the definition of LexBFS it follows that all the neighbors of $v_1$ must be selected before its non-neighbors, so if $\sigma$ is a LexBFS ordering of $G$, it must be that $v_1<_\sigma v_4$; a contradiction.
If $G$ contains an induced $C_4$, the same reasoning holds, so we get a contradiction in any case and it follows that $G$ is a $\{P_4,C_4\}$-free graph.
\end{proof}

\begin{figure}[H]
\centering
\begin{tikzpicture}[vertex/.style={inner sep=2pt,draw, fill,circle}, ]
\begin{scope}
\node[vertex, label=below:$ b $] (b) at (0,0) {};
\node[vertex, label=below:$ c $] (c) at (2,0) {};
\node[vertex, label=above:$ a $] (a) at (0,1) {};
\node[vertex, label=above:$ d $] (d) at (2,1) {};
\draw[] (a) -- (b) -- (c) -- (d) -- (a);
\node[] (s1) at (1,-1) {$\sigma_1=(b,c,d,a)$};
\node[] (s2) at (1,-1.75) {$\sigma_2=(b,c,a,d)$};

\end{scope}
\begin{scope}[xshift=5cm]
\node[vertex, label=above:$ a $] (a) at (0,0) {};
\node[vertex, label=above:$ b $] (b) at (1,0) {};
\node[vertex, label=above:$ c $] (c) at (2,0) {};
\node[vertex, label=above:$ d $] (d) at (3,0) {};
\draw[] (a) -- (b) -- (c) -- (d);
\node[] (s1) at (1.5,-1) {$\sigma_1=(b,c,d,a)$};
\node[] (s2) at (1.5,-1.75) {$\sigma_2=(b,c,a,d)$};
\end{scope}
\end{tikzpicture}
\caption[MNS orderigns that are not LexBFS or LexDFS.]{A cycle (left) and a path (right) on $4$ vertices. $\sigma_1$ is a MNS ordering that is not a LexBFS ordering. $\sigma_2$ is a MNS ordering that is not a LexDFS ordering.}
\label{fig:forbiddengraphs-p4c4}
\end{figure}

It follows that given a graph $G$ satisfying the property that every MNS ordering is a LexBFS (resp., LexDFS) ordering, it must be true that $G$ is $\{P_4,C_4\}$-free graph. It turns out that this is also sufficient condition - in a $\{P_4,C_4\}$-free graphs every MNS ordering is also a LexBFS ordering and a LexDFS ordering. We prove these claims in the following two theorems. Observe that $\{P_4,C_4\}$-free graphs are also known as trivially-perfect graphs, and can be obtained from the $1$-vertex graphs using the operations of disjoint union and addition of universal vertices~\cite{golumbic1978trivially}.

\begin{lemma}\label{lem:p4c4-lexbfs}
Let $G$ be a $\{P_4,C_4\}$-free graph.
Then every MNS ordering of $G$ is also a LexBFS ordering of $G$. 
\end{lemma}
\begin{proof}

Let $G$ be a $\{P_4,C_4\}$-free graph, and assume for contradiction that there is an ordering $\sigma$ of vertices in $G$ that is a MNS ordering of $G$ and not a LexBFS ordering of $G$. From~\cref{thm:lbfs-ordering-characterization} we know that there exist vertices $a,b,c$ in $G$ such that $a<_\sigma
b<_\sigma c$ and $ac\in E(G)$, $ab\notin E(G)$, and for every $d<_\sigma a$ it holds that either $db\notin E(G)$ or $dc\in E(G)$. Let $a,b,c$ be the left-most such triple (that is, for any other triple $a'<_\sigma
b'<_\sigma c'$ and $a'c'\in E(G)$, $a'b'\notin E(G)$, with $\sigma(a')+\sigma(b')+\sigma(c')< \sigma(a)+\sigma(b)+\sigma(c)$ the~\cref{thm:lbfs-ordering-characterization} is satisfied). 

We know that $\sigma$ is MNS ordering, so by \cref{thm:mns-ordering-characterization} it follows that there exists a vertex $ d<_\sigma b$ in $G$ such that $db\in E(G)$ and $dc\notin E(G)$. It cannot be that $d<_\sigma a$, so it follows that have that $a<_\sigma d<_\sigma b$. If $ad\in E(G)$, or $bc\in E(G)$, then the vertices $\{a,b,c,d\}$ induce either a $P_4$, or a $C_4$ in $G$; a contradiction. It follows that $ad\notin E(G)$ and $bc\notin E(G)$. 

Now the vertices $a<_\sigma d<_\sigma c$ form a triple with $ac\in E(G)$ and $ad\notin E(G)$, so they must satisfy the \cref{thm:lbfs-ordering-characterization} and there exists a vertex $d_1<_\sigma a$ such that $d_1d\in E(G)$ and $d_1c\notin E(G)$. Moreover, it follows that $d_1a\notin E(G)$, for otherwise the vertices $\{d_1,a,d,c\}$ form a $P_4$ in $G$. 

Consider now the vertices $d_1<_\sigma a<_\sigma d$. They form a triple satisfying $d_1d\in E(G)$ and $d_1a\notin E(G)$, so by \cref{thm:lbfs-ordering-characterization} there exists a vertex $d_2<_\sigma d_1$ such that $d_2a\in E(G)$ and $d_2d\notin E(G)$. If $d_2d_1\in E(G)$, then the vertices $\{d_2,d_1,a,d\}$ form a $P_4$, a contradiction. 
We can continue the same process and apply \cref{thm:lbfs-ordering-characterization} on vertices $d_2,d_1,a$ in order to obtain a vertex $d_3$, and then apply the same process on vertices $d_i, d_{i-1}, d_{i-2}$ to obtain vertices $d_{i+1}$, for $i\ge 3$, as in \cref{thm:lbfs-ordering-characterization}.
Since a graph $G$ is finite, in this process we get the vertices $d_1,\dots, d_k$, for some finite number $k$. Let $k$ be the length of a maximal sequence of such vertices. It will be true that $d_i<_\sigma d_{i-1}$ for all $i\ge 2$, and 
\begin{equation}
\label{eqn:edges}
d_{i+2}d_i\in E(G)\, {\rm and }\, d_{i+3}d_i\notin E(G),
\end{equation} for all $i\ge 1$.

We prove the following claim inductively. 

\noindent\textbf{Claim :} $d_id_{i-1}\notin E(G)$, for $i\in \{2, \dots, k\}$

We know that $d_2d_1\notin E(G)$, so the inductive basis holds trivially. Assume now that for all $i\le j$ we have that $d_id_{i-1}\notin E(G)$. Let $i=j+1$. If $d_{j+1}d_j\in E(G)$, then the vertices $\{d_{j+1},d_j,d_{j-1},d_{j-2}\}$ induce a $P_4$ in $G$. This is true since $d_jd_{j-1}\notin E(G)$, $d_{j-1}\notin E(G)$ by inductive hypothesis, while other edges and non-edges follow from \ref{eqn:edges}. A contradiction with definition of $G$, so the claim follows.
 
It follows that vertices $d_k<_\sigma d_{k-1}<_\sigma d_{k-2}$ satisfy that $d_kd_{k-2}\in E(G)$ and $d_kd_{k-1}\notin E(G)$, so by \cref{thm:lbfs-ordering-characterization} there exists a vertex $d_{k+1}$ and $k$ is not maximal; a contradiction. 

\end{proof}

\begin{lemma}\label{lem:p4c4-lexdfs}
Let $G$ be a $\{P_4,C_4\}$-free graph.
Then every MNS ordering of $G$ is also a LexDFS ordering of $G$. 
\end{lemma}
\begin{proof}

Let $G$ be a $\{P_4,C_4\}$-free graph, and assume for contradiction that there is an ordering $\sigma$ of vertices in $G$ that is a MNS ordering of $G$ and not a LexDFS ordering of $G$. From~\cref{thm:ldfs-ordering-characterization} we know that there exist vertices $a,b,c$ in $G$ such that $a<_\sigma
b<_\sigma c$ and $ac\in E(G)$, $ab\notin E(G)$, and for every $a<_\sigma d<_\sigma b$ it holds that either $db\notin E(G)$ or $dc\in E(G)$. Let $a,b,c$ be the left-most such triple (that is, for any other triple $a'<_\sigma
b'<_\sigma c'$ and $a'c'\in E(G)$, $a'b'\notin E(G)$, with $\sigma(a')+\sigma(b')+\sigma(c')< \sigma(a)+\sigma(b)+\sigma(c)$ the~\cref{thm:ldfs-ordering-characterization} is satisfied). 

We know that $\sigma$ is MNS ordering, so by \cref{thm:mns-ordering-characterization} it follows that there exists a vertex $ d<_\sigma b$ in $G$ such that $db\in E(G)$ and $dc\notin E(G)$. It cannot be that $a<_\sigma d<_\sigma b$, so it follows that have that $d<_\sigma a$. If $ad\in E(G)$, or $bc\in E(G)$, then the vertices $\{a,b,c,d\}$ induce either a $P_4$, or a $C_4$ in $G$; a contradiction. It follows that $ad\notin E(G)$ and $bc\notin E(G)$. 

Now the vertices $d<_\sigma a<_\sigma b$ form a triple with $db\in E(G)$ and $da\notin E(G)$, so they must satisfy the \cref{thm:ldfs-ordering-characterization} and there exists a vertex $d<_\sigma d_1<_\sigma a$ such that $d_1a\in E(G)$ and $d_1b\notin E(G)$. Moreover, it follows that $dd_1\notin E(G)$, for otherwise the vertices $\{d,d_1,a,b\}$ form a $P_4$ in $G$. 

Consider now the vertices $d<_\sigma d_1<_\sigma b$. They form a triple satisfying $db\in E(G)$ and $dd_1\notin E(G)$, so by \cref{thm:ldfs-ordering-characterization} there exists a vertex $d<_\sigma d_2<_\sigma d_1$ such that $d_2d_1\in E(G)$ and $d_2b\notin E(G)$. If $dd_2\in E(G)$, then the vertices $\{d,d_2,d_1,b\}$ form a $P_4$, a contradiction. 
We can continue the same process and apply \cref{thm:ldfs-ordering-characterization} on vertices $d,d_2,b$ in order to obtain a vertex $d_3$, and then apply the same process on vertices $d,d_i,b$ to obtain vertices $d_{i+1}$, for $i\ge 3$, as in \cref{thm:ldfs-ordering-characterization}.
Since a graph $G$ is finite, in this process we get the vertices $d_1,\dots, d_k$, for some finite number $k$.

Let $k$ be the length of a maximal sequence of such vertices. It will be true that $d<_\sigma d_i<_\sigma d_{i-1}$ for all $i\ge 2$, and 
\begin{equation}
\label{eqn:edges2}
d_{i+1}d_i\in E(G)\, {\rm and }\, d_ib\notin E(G),
\end{equation} for all $i\ge 1$.

We prove the following claim inductively. 

\noindent\textbf{Claim :} $dd_i\notin E(G)$, for $i\in \{1,2, \dots, k\}$

We know that $dd_1\notin E(G)$, so the inductive basis holds trivially. Assume now that for all $i\le j$ we have that $dd_{i}\notin E(G)$. Let $i=j+1$. If $dd_{j+1}\in E(G)$, then the vertices $\{b,d,d_{j+1},d_j\}$ induce a $P_4$ in $G$. This is true since $dd_{j}\notin E(G)$ by inductive hypothesis, while other edges and non-edges follow from \ref{eqn:edges2}. A contradiction with definition of $G$, so the claim follows.
 
It follows that vertices $d<_\sigma d_k<_\sigma b$ satisfy that $db\in E(G)$ and $dd_k\notin E(G)$, so by \cref{thm:ldfs-ordering-characterization} there exists a vertex $d_{k+1}$ such that $d<_\sigma d_{k+1}<_\sigma d_k$ and $k$ is not maximal; a contradiction. 
\end{proof}
From \cref{lem:p4c4-lexbfs,lem:p4c4-lexdfs} the proof of main theorem of this section follows  immediately. 
\thmMNSldfsLBFS*

In other words, it follows that MNS and LexBFS are equivalent in $G$ if and only if $G$ is a $\{P_4,C_4\}$-free graph, and similarly, MNS and LexDFS are equivalent in $G$ if and only if $G$ is a $\{P_4,C_4\}$-free graph.

\fi
\section{Conclusion and further work}\label{sec:conclusion}

In this paper we consider the major graph search methods and study the graphs in which vertex-orders of one type coincide with vertex-orders of some other type. 
Interestingly, three different graph families suffice to describe graph classes equivalent for the ten pairs of graph search methods that we consider, which provides an additional aspect of similarities between the studied search methods.

Among the natural graph search methods not yet considered in this setting would be the
\emph{Maximum Cardinality Search} (MCS), introduced in 1984 (for definition see Tarjan and Yannakakis \cite{tarjan1984simple}). As shown on \cref{fig:graphsearchrelations2}, every MCS is a special case of an MNS vertex-order.
While it is easy to verify that  $\{P_4,C_4,\text{paw, diamond}\}$-free graphs do not distinguish between MNS and MCS vertex orders,  \cref{fig:forbiddengraphs-MNS-MCS} provides examples of graphs which admit MNS, but not MNS vertex orders. 
Characterising graphs equivalent for MNS and MCS remains an open question.

\begin{figure}[H]
\centering
\begin{tikzpicture}[scale=.8,vertex/.style={inner sep=2pt,draw, fill,circle}, ]
\begin{scope}
\node[vertex, label=below:$ b $] (b) at (0,0) {};
\node[vertex, label=below:$ c $] (c) at (1,0) {};
\node[vertex, label=above:$ a $] (a) at (0,1) {};
\node[vertex, label=above:$ e $] (e) at (1,1) {};
\node[vertex, label=right:$ d $] (d) at (2,0.5) {};
\draw[] (a) -- (b) -- (c) -- (d) -- (e) -- (a);
\node[] (s) at (1,-1) {$\sigma=(b,c,d,a,e)$};
\draw[] (e) -- (c);
\end{scope}
\begin{scope}[xshift=5.5cm, yshift=-3cm]
\node[vertex, label=above:$ a $] (a) at (0,1) {};
\node[vertex, label=below:$ b $] (b) at (0,0) {};
\node[vertex, label=below:$ c $] (c) at (1,0.5) {};
\node[vertex, label=below:$ d $] (d) at (2,0.5) {};
\node[vertex, label=below:$ e $] (e) at (3,0.5) {};
\draw[] (c) -- (a) -- (b) -- (c) -- (d) -- (e);
\node[] (s) at (1.5,-1) {$\sigma=(d,c,b,e,a)$};
\end{scope}
\begin{scope}[xshift=10cm, yshift=-3cm]
\node[vertex, label=above:$ a $] (a) at (1,1) {};
\node[vertex, label=below:$ b $] (b) at (0,0.5) {};
\node[vertex, label=below:$ c $] (c) at (2,0.5) {};
\node[vertex, label=below:$ d $] (d) at (1,0) {};
\node[vertex, label=below:$ e $] (e) at (3,0.5) {};
\node[] (s) at (1.5,-1) {$\sigma=(c,a,d,e,b)$};
\draw[] (c) -- (a) -- (b) -- (d)--(c)  -- (e);
\end{scope}
\begin{scope}[xshift=1cm, yshift=-3cm]
\node[vertex, label=above:$ a $] (a) at (1,1) {};
\node[vertex, label=below:$ b $] (b) at (0,0.5) {};
\node[vertex, label=below:$ c $] (c) at (2,0.5) {};
\node[vertex, label=below:$ d $] (d) at (1,0) {};
\node[vertex, label=below:$ e $] (e) at (3,0.5) {};
\node[] (s) at (1,-1) {$\sigma=(a,d,c,e,b)$};
\draw[] (c) -- (a) -- (b) -- (d)--(c)  -- (e);
\draw[] (d) -- (a);
\end{scope}\begin{scope}[xshift=4cm]
\node[vertex, label=above:$ a $] (a) at (0,1) {};
\node[vertex, label=below:$ b $] (b) at (0,0) {};
\node[vertex, label=below:$ c $] (c) at (2,0) {};
\node[vertex, label=above:$ d $] (d) at (2,1) {};
\node[vertex, label=below:$ e $] (e) at (1,0.5) {};
\node[] (s) at (1,-1) {$\sigma=(e,b,a,d,c)$};
\draw[] (b)--(e)--(a) -- (b) -- (c)--(d)--(a);
\draw[] (c) -- (e);
\end{scope}
\begin{scope}[xshift=8cm]
\node[vertex, label=above:$ a $] (a) at (1,1.25) {};
\node[vertex, label=below:$ b $] (b) at (0,0.5) {};
\node[vertex, label=below:$ c $] (c) at (2,0.5) {};
\node[vertex, label=below:$ d $] (d) at (1,0.5) {};
\node[vertex, label=below:$ e $] (e) at (1,-0.25) {};
\node[] (s) at (1,-1) {$\sigma=(d,b,e,a,c)$};
\draw[] (c) -- (d) -- (b) -- (e)--(c)  -- (a)--(b);
\end{scope}
\begin{scope}[xshift=12cm]
\node[vertex, label=below:$ a $] (a) at (0,0) {};
\node[vertex, label=left:$ b $] (b) at (0.5,1) {};
\node[vertex, label=below:$ c $] (c) at (1,0) {};
\node[vertex, label=right:$ d $] (d) at (1.5,1) {};
\node[vertex, label=below:$ e $] (e) at (2,0) {};
\node[] (s) at (1,-1) {$\sigma=(a,c,e,d,b)$};
\draw[] (b)--(c) -- (d) -- (b) -- (a)--(c)  -- (e)--(d);
\end{scope}
\end{tikzpicture}\caption[Orderings that are MNS and not MCS.]{Graphs and corresponding orderings that are MNS and not MCS orderings. }
\label{fig:forbiddengraphs-MNS-MCS}
\end{figure}


\subsection*{Acknowledgements}
The authors would like to thank prof. Martin Milanič for the initial suggestion of the problem, and to Ekki Köhler and his reseach group, for introducing the diverse world of graph searches to us.

\bibliographystyle{bibtex/splncs03}
\bibliography{biblio}

\begin{thebibliography}{10}
\providecommand{\url}[1]{\texttt{#1}}
\providecommand{\urlprefix}{URL }

\bibitem{arikati1990linear}
Arikati, S.R., Rangan, C.P.: Linear algorithm for optimal path cover problem on
  interval graphs. Information Processing Letters  35(3),  149--153 (1990)

\bibitem{beisegel2018characterising}
Beisegel, J.: Characterising {AT}-free graphs with {BFS}. In: Brandst{\"{a}}dt,
  A., K{\"{o}}hler, E., Meer, K. (eds.) Graph-Theoretic Concepts in Computer
  Science. pp. 15--26 (2018)

\bibitem{Berry2009mls}
Berry, A., Krueger, R., Simonet, G.: Maximal label search algorithms to compute
  perfect and minimal elimination orderings. SIAM Journal on Discrete
  Mathematics  23(1),  428--446 (2009)

\bibitem{MR1454439}
Brandst\"{a}dt, A., Dragan, F.F., Nicolai, F.: Lex{BFS}-orderings and powers of
  chordal graphs. Discrete Math.  171(1-3),  27--42 (1997)

\bibitem{corneil2013}
Corneil, D.G., Dalton, B., Habib, M.: {LDFS} based certifying algorithm for the
  {Minimum Path Cover} problem on cocomparability graphs. {SIAM} Journal on
  Computing  42(3),  792--807 (2013)

\bibitem{corneil2016power}
Corneil, D.G., Dusart, J., Habib, M., Kohler, E.: On the power of graph
  searching for cocomparability graphs. SIAM Journal on Discrete Mathematics
  30(1),  569--591 (2016)

\bibitem{corneil2008unified}
Corneil, D.G., Krueger, R.M.: A unified view of graph searching. SIAM Journal
  on Discrete Mathematics  22(4),  1259--1276 (2008)

\bibitem{even2011graph}
Even, S.: Graph algorithms. Cambridge University Press (2011)

\bibitem{golumbic2004book}
Golumbic, M.: Algorithmic Graph Theory and Perfect Graphs, pp. 98--99. Annals
  of Discrete Mathematics, Volume 57, Elsevier (2004)

\bibitem{golumbic1978trivially}
Golumbic, M.C.: Trivially perfect graphs. Discrete Mathematics  24(1),
  105--107 (1978)

\bibitem{kohler2014linear}
K{\"o}hler, E., Mouatadid, L.: Linear time lexdfs on cocomparability graphs.
  In: Scandinavian Workshop on Algorithm Theory. pp. 319--330. Springer (2014)

\bibitem{lucas1882recreations}
Lucas, {\'E}.: R{\'e}cr{\'e}ations math{\'e}matiques: Les traversees. Les
  ponts. Les labyrinthes. Les reines. Le solitaire. la num{\'e}ration. Le
  baguenaudier. Le taquin, vol.~1. Gauthier-Villars et fils (1882)

\bibitem{moore1959shortest}
Moore, E.F.: The shortest path through a maze. In: Proc. Int. Symp. Switching
  Theory, 1959. pp. 285--292 (1959)

\bibitem{olariu1988paw}
Olariu, S.: Paw-free graphs. Information Processing Letters  28(1),  53--54
  (1988)

\bibitem{rose1976}
Rose, D.J., Lueker, G.S., Tarjan, R.E.: Algorithmic aspects of vertex
  elimination on graphs. {SIAM} Journal on Computing  5(2),  266--283 (1976)

\bibitem{tarjan1972depth}
Tarjan, R.E.: Depth-first search and linear graph algorithms. SIAM journal on
  computing  1(2),  146--160 (1972)

\bibitem{tarjan1984simple}
Tarjan, R.E., Yannakakis, M.: Simple linear-time algorithms to test chordality
  of graphs, test acyclicity of hypergraphs, and selectively reduce acyclic
  hypergraphs. SIAM Journal on computing  13(3),  566--579 (1984)

\end{thebibliography}

\ifshort
\newpage
\appendix

\section{Preliminaries}
We denote the $i$-th neighbourhood of a vertex $v$ in $G$ by 
$$
N_G^i(v)=\{w\mid d_G(v,w)=i\}.
$$
We first recall from Olariu~\cite{olariu1988paw}, that the following holds.
\begin{theorem}
\label{thm:olariu}
A paw-free graph is either triangle-free, or complete multipartite. 
\end{theorem}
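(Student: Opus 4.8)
The plan is to read the statement for a connected graph $G$ (the standing assumption of the paper; for disconnected graphs one argues componentwise). I want to show: if $G$ is paw-free and contains a triangle, then $G$ is complete multipartite, since the triangle-free alternative covers all remaining cases. I will use the standard characterisation that $G$ is complete multipartite if and only if its non-adjacency is transitive, equivalently $G$ has no induced $K_1 \cup K_2$ (an edge together with a vertex adjacent to neither of its endpoints).

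The engine of the whole argument is one local observation: \emph{if $T$ is a triangle and a vertex $x\notin T$ is adjacent to at least one vertex of $T$, then $x$ is adjacent to at least two}, since a vertex joined to exactly one vertex of a triangle induces a paw. First I would use this, together with connectivity, to spread triangles everywhere. Let $W$ be the set of vertices lying in some triangle. If $z$ is adjacent to some $y\in W$, with $y$ in a triangle $\{y,p,q\}$, the observation forces $z$ to be adjacent to at least two of $\{y,p,q\}$; those two are mutually adjacent, so with $z$ they form a triangle and $z\in W$. Thus $W$ is nonempty and closed under taking neighbours, and connectivity gives $W=V(G)$. The same observation upgrades this to \emph{every edge lies in a triangle}: if $uv$ had no common neighbour, then choosing a triangle $\{u,p,q\}$ through $u$ we would have $v$ adjacent to $u$ but to neither $p$ nor $q$, contradicting the observation.

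With ``every edge is in a triangle'' in hand, suppose for contradiction that $G$ is not complete multipartite and choose an induced $K_1\cup K_2$ — an edge $ac$ and a vertex $b$ with $ab,cb\notin E$ — minimising the distance $d(b,\{a,c\})$. This distance is at least $2$; and if it were at least $3$, the neighbour $z_1$ of $b$ on a shortest path to $\{a,c\}$ would itself be non-adjacent to both $a$ and $c$ (being at distance $d(b,\{a,c\})-1\ge 2$ from them), yielding a strictly closer induced $K_1\cup K_2$ with edge $ac$ and isolated vertex $z_1$ — contradicting minimality. Hence $d(b,\{a,c\})=2$, so $b$ has a neighbour $w$ adjacent to, say, $a$. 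If also $w\sim c$, then $\{a,c,w\}$ is a triangle with $b$ attached only at $w$, a paw. If instead $w\not\sim c$, take a common neighbour $x$ of $a$ and $c$ (it exists since $ac$ lies in a triangle); applying the local observation to the triangle $\{a,c,x\}$ and the vertex $w$ forces $w\sim x$, so $\{a,w,x\}$ is a triangle to which $b$ is again attached only at $w$ — a paw once more. Either way we reach a contradiction, so $G$ is complete multipartite.

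The main obstacle is exactly the step where the induced $K_1\cup K_2$ might sit far from any triangle: the paw argument is purely local, so the real content is in propagating it across the graph. The minimisation of $d(b,\{a,c\})$ is what tames this, collapsing the global configuration to distance precisely $2$, after which one common neighbour of $ac$ plus the local observation close the case analysis. A final routine check is that the vertices $a,c,w,x$ stay genuinely distinct and induce the claimed paws — but the prescribed adjacencies and non-adjacencies (for instance $b\not\sim a,c,x$ while $b\sim w$) already separate them, so no degenerate coincidence can occur.
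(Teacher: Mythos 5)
The paper offers no proof of this statement: it is simply recalled from Olariu's 1988 note on paw-free graphs and used as a black box in the characterisation of $\{$pan, diamond$\}$-free graphs, so your argument must be judged on its own. It is essentially correct and entirely elementary. The single local observation --- a vertex adjacent to exactly one vertex of a triangle creates a paw --- is the right engine: it propagates triangles over the whole connected graph, upgrades to ``every edge lies in a triangle,'' and the minimisation of $d(b,\{a,c\})$ over induced $K_1\cup K_2$ configurations correctly collapses the remaining obstruction to distance exactly $2$, where the two short cases on $w$ finish the job.

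One detail is asserted rather than proved: in the subcase $w\not\sim c$ you claim that $b$ attaches to the triangle $\{a,w,x\}$ only at $w$, which requires $b\not\sim x$, and nothing in your setup prescribes this. The repair is one sentence, by the very same observation: if $b\sim x$, then $b$ is adjacent to exactly one vertex of the triangle $\{a,c,x\}$ (it is non-adjacent to $a$ and to $c$ by the choice of the triple), so $\{a,c,x,b\}$ already induces a paw and the contradiction arrives even earlier. With that added, the proof is complete. Your decision to read the statement for connected graphs and argue componentwise is also the right one: as literally stated the theorem fails for disconnected graphs (e.g.\ the disjoint union of a triangle and an edge), and the paper's standing assumption that all graphs are connected is what makes the quoted form accurate.
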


\section{Proof of \cref{thm:(L)DFS}}\label{sec:pan-diamond}

\subsection{Breadth First Search and Lexicographic Breadth First Search}

Graph search methods in general don't have the hereditary property. Let $G$ be a graph with a search ordering $\sigma$ of particular type, and let $H$ be an induced subgraph of $G$. It is not true that $\sigma^*$ obtained from $\sigma$ by deletion of vertices that are not in $H$ represents a search ordering of the same type of $H$, as can be seen in the following example. 

\begin{example}
Let $G$ be a cycle on $5$ vertices, and let us denote its vertices by $v_1,v_2,v_3,v_4,v_5$ in the cyclic order. It is not difficult to see that $\sigma=(v_1,v_2,v_5,v_3,v_4)$ is a BFS ordering of $G$. Let $H$ be a subgraph of $G$ obtained by deletion of vertex $v_5$, and let $\sigma^*$ be an ordering of vertices in $H$ obtained from $\sigma$ after deletion of $v_5$. Then $\sigma^*=(v_1,v_2,v_3,v_4)$ is not a valid BFS ordering of $H$. 
\end{example}

From the above it follows that it could happen that there is an ordering of a graph $H$ that is BFS and not LexBFS ordering, while in a graph $G$ containing $H$ as an induced subgraph it is not necessarily true. It means that the equivalence between BFS and LexBFS in $G$ does not imply the same equivalence or every induced subgraph of $G$. In the following example we can see that a valid LexBFS ordering of $G$ yields an ordering of its subgrsph $H$ that is BFS and not LexBFS. 

\begin{example}
Let $G$ be a graph from \cref{fig:remove-6-pan}. After removing the vertex $u$ from $G$ we get a $6$-pan $G'$. Observe that in $G'$ we can find a BFS ordering $\sigma^*=(v_1,v_2,v_6,v_3,v_5,v,v_4)$ that is not a valid LexBFS ordering. If $\sigma^*$ is a part of a valid BFS ordering $\sigma$ of $G$, then we must visit $u$ before visiting non-neighbors of $v_1$, and after visiting vertices $v_2$ and $v_6$. Then it follows that $\sigma=(v_1,v_2,v_6,u,v_3,v_5,v,v_4)$ and it represents a valid LexBFS ordering of $G$, so is not an example of ordering of $G$ that is BFS and not LexBFS. 
\end{example}
\begin{figure}[H]
\centering
\begin{tikzpicture}[vertex/.style={inner sep=2pt,draw, fill,circle},xscale=1.5, yscale=1.8]
\begin{scope}
\node[vertex, label=below:$ v_6 $] (v6) at (0,0) {};
\node[vertex, label=below:$ v_5 $] (v5) at (1,0) {};
\node[vertex, label=below:$ v_4 $] (v4) at (2,0) {};
\node[vertex, label=above:$ v_1 $] (v1) at (-0.5,1) {};
\node[vertex, label=above:$ u $] (u) at (0.5,1) {};
\node[vertex, label=above:$ v $] (v) at (1.5,1) {};
\node[vertex, label=above:$ v_3 $] (v3) at (2.5,1) {};
\node[vertex, label=above:$ v_2 $] (v2) at (1,2) {};
\draw[] (v3)--(v)--(u) --  (v1) -- (v2)--(v3)  -- (v4)--(v5) -- (v6) --(v1);
\end{scope}
\begin{scope}[xshift=4.5cm]
\node[vertex, label=below:$ v_6 $] (v6) at (0,0) {};
\node[vertex, label=below:$ v_5 $] (v5) at (1,0) {};
\node[vertex, label=below:$ v_4 $] (v4) at (2,0) {};
\node[vertex, label=above:$ v_1 $] (v1) at (-0.5,1) {};
\node[vertex, label=above:$ v $] (v) at (1.5,1) {};
\node[vertex, label=above:$ v_3 $] (v3) at (2.5,1) {};
\node[vertex, label=above:$ v_2 $] (v2) at (1,2) {};
\draw[] (v3)--(v);
\draw[] (v1) -- (v2)--(v3)  -- (v4)--(v5) -- (v6) --(v1);
\end{scope}
\end{tikzpicture}
\caption[The orderings of a graph and its induced subgraph.]{The ordering $\sigma=(1,2,6,u,3,5,v,4)$ is a valid LexBFS ordering of $G$ (left), while the ordering $\sigma^*=(1,2,6,3,5,v,4)$ is not a valid LexBFS ordering of $G-u$ (right).}
\label{fig:remove-6-pan}
\end{figure}

Despite both demotivating examples above, we identify certain graphs where the equivalence between BFS and LexBFS does not hold in any graph containing them as an induced subgraph.

\begin{lemma}\label{lemma:paw-diamond-bfs}
Let $G$ be a graph which contains a diamond or a pan as an induced subgraph. Then there is a BFS ordering of $G$ that is not a LexBFS ordering of $G$.
\end{lemma}
\begin{proof}
First assume that $G$ contains a paw or a diamond as an induced subgraph. We show that there is a BFS ordering of $G$ that is not a LexBFS ordering of $G$ 
The claim can be easily justified by giving a prefix of an order $\sigma$ that is a BFS order and not a LexBFS order of a graph containing a paw or a diamond.
Let $G$ be a graph and let $H$ be a paw graph, contained in $G$ as an induced subgraph. Using the same notation as in~\cref{fig:3BFSnotLBFS} (left) we can define the BFS ordering $\sigma_1$ of $G$ starting in $c$, with first four vertices in $\sigma$ being $c,a,d,b$, in that order. Similarly, if $H$ is a diamond contained in $G$ as induced subgraph, we can define the BFS ordering $\sigma$ of $G$ starting in $c$ and visiting consecutively vertices $b,d,a$ (\cref{fig:3BFSnotLBFS} right). In both cases $\sigma $ is a BFS ordering, since it starts with a vertex $c$ and visits its neighbors. 
Also, $\sigma$ cannot be a LexBFS ordering, since in both cases vertex $a$ has label $\{n, n-1\}$, while $d$ has a label $n$, so $a$ should appear before $d$, no matter how the rest of $\sigma$ is defined. 

Now consider the case when $G$ contains a pan bigger then a paw. So denote $P$ to be a smallest pan in $G$, let $k\ge 4$ be the length of its cycle. Denote vertices of $P$ by $\{v_0,v_1,\dots,v_k\}$ such that $v_k$ is a pendant vertex connected to $v_{\lfloor \frac{k}{2}\rfloor-1}$. 
For any integer $i\in \{1,\dots,\lfloor \frac{k}{2}\rfloor\}$ we first observe the following:
\begin{enumerate}
    \item We have that $\{v_i,v_{k-i}\}\subseteq N_G^i(v_0),\text{ and }v_n\in N_G^{\lfloor k/2\rfloor}(v_0)$.\label{it:panbfs}
    \item Shortest $(v_0v_i)$-path in $G$ is unique and lies in $P$. Similarly, shortest $(v_0v_{k-i})$-path in $G$ is unique and lies in $P$.\label{it:bfs-unique}
    \item Let $P'$ be any shortest path between $v_0$ and a vertex from $N_G^{\lfloor k/2\rfloor}(v_0)$. If $P'$ is not completely contained in $P$, then it does not intersect $P$ (except at endpoints).\label{it:short}
\end{enumerate}

Indeed, any path violating the above would give rise to a pan on less then $k+1$ vertices, contradicting the choice of $P$. 
We distinguish two cases depending on the parity of $k$.

\paragraph{The case where $k$ is odd.}
First observe that for any $i\in \{1,\dots,k-1\}$
the shortest path between $v_0$ and $v_i$ is lying within $P$ and is unique in $G$. 
This is true as an existence of any different shortest path would give rise to a pan smaller then $P$.

Now consider a BFS vertex-ordering $\alpha$ starting at $v_0$, where the first vertex we choose at the distance $i$ from $v_0$ is 
$v_{k-i}$, for any $i\in \{1,2,\dots, (k-1)/2\}$. 
This is always possible as $(v_0,v_{k-1},v_{k-2},\dots,v_{(k+1)/2})$ is a path in $G$.
Moreover, we prioritise choosing a vertex $v_n$ as soon as possible.
By \cref{it:panbfs} we recall that
$\{a_{(k-1)/2},a_{(k+1)/2},a_k\}\subseteq N^{(k-1)/2}(a_0)$.
We next claim that $v_k<_{\alpha}v_{(k-1)/2}$. 
Indeed, the unique shortest path between $v_0$ and $v_{(k-1)/2}$ in $G$ goes through $v_{(k-3)/2}$, which is at the same time adjacent to $v_k$. It is hence always possible to select a vertex $v_k$ before $v_{(k-1)/2}$ in $\alpha$.

Now observe that $v_{(k+1)/2}<_{\alpha}v_k<_{\alpha}v_{(k-1)/2}$, where
$v_{k+1/2}v_{k-1/2}\in E(G)$ while
$v_{(k+1)/2}v_k\notin E(G)$.  
\cref{thm:lbfs-ordering-characterization} hence implies that there exists another vertex $x<_{\alpha}v_{(k+1)/2}$ such that $xv_k\in E(G)$
while $xv_{(k-1)/2}\notin E(G)$.
To this end recall that the first vertex we chose 
from the set $N_G^{(k-1)/2}(v_0)$ was $v_{(k+1)/2}$, so $x<_{\alpha}v_{(k+1)/2}$ implies that 
$d_G(v_0,x)\le (k-3)/2$. 
Let $Q$ be a shortest path between $v_0$ and $x$.
We conclude this case by identifying a pan smaller then $P$,
inside of a graph induced by vertices $\{v_{k-1}, v_k\}\cup\{v_1,\dots,v_{(k-1)/2}\}\cup Q$.

\paragraph{The case where $k$ is even.}
Denote by $\alpha$ a BFS vertex-ordering which starts at $v_0$, and where,
among the eligible vertices, we prioritise vertices from $P$.
As an additional tie-breaking rule we select the vertex from $P$ with the minimal index,
until we have used all vertices at distance at most $k/2-1$ from $v_0$.
Immediately after vertices from $N^{k/2-1}(v_0)$, we append $v_n$, and then $v_{k/2}$ to $\alpha$. 

In particular, 
by  construction of $\alpha$ and by \cref{it:panbfs,it:bfs-unique,it:short} the sequence $\alpha$ must contain the following subsequence
\begin{align}
    v_0 <_\alpha 
    v_1 <_\alpha 
    v_{k-1} <_\alpha 
    v_2 <_\alpha 
    v_{k-2} <_\alpha 
    \dots <_\alpha 
    v_{(k/2)-1} <_\alpha 
    v_{(k/2)+1} <_\alpha 
    v_k <_\alpha 
    v_{k/2}.
\end{align}
Now consider the labels of $v_k$ and $v_{k/2}$ at the moment right before $v_k$ is chosen. Clearly the latter contains the index of vertex $v_{(k/2)+1}$ while the former does not, and clearly both contain the index of vertex $v_{(k/2)-1}$. 
This implies that $\alpha$ is not a LexBFS order as it should chose the vertex $v_{k/2}$ instead of $v_k$.
Here we note that the labels of $v_k$ and $v_{k/2}$ might contain additional entries in its label, however those cannot affect the lexicographic priority of  $v_{k/2}$, as the label of  $v_{(k/2)+1}$ preceeds them all by the definition of $\alpha$, and by \cref{it:short}.
This concludes the proof of the claim.



\end{proof}

\begin{figure}[H]
\centering
\begin{tikzpicture}[vertex/.style={inner sep=2pt,draw, fill,circle}, ]
\begin{scope}
\node[vertex, label=below:$ b $] (b) at (0,0) {};
\node[vertex, label=above:$ c $] (c) at (1,0.5) {};
\node[vertex, label=above:$ a $] (a) at (0,1) {};
\node[vertex, label=right:$ d $] (d) at (2,0.5) {};
\draw[] (c) -- (a) -- (b) -- (c) -- (d);
\node[] (s) at (1,-1.2) {$\sigma_1=(c,a,d,b)$};
\node[] (s) at (1,-1.8) {$\sigma_2=(b,c,d,a)$};
\end{scope}
\begin{scope}[xshift=6cm]
\node[vertex, label=above:$ a $] (a) at (1,1.25) {};
\node[vertex, label=below:$ b $] (b) at (0,0.5) {};
\node[vertex, label=below:$ d $] (d) at (2,0.5) {};
\node[vertex, label=below:$ c $] (c) at (1,-0.25) {};
\node[] (s) at (1,-1.2) {$\sigma_1=(a,b,d,c)$};
\node[] (s) at (1,-1.8) {$\sigma_2=(b,c,d,a)$};
\draw[] (a) --  (b) -- (c)--(d)  -- (a)--(c);
\end{scope}
\end{tikzpicture}
\caption[Some orderings of a paw and a diamond.]{A paw (left) and a diamond (right). The corresponding search orderings $\sigma_1$ ($\sigma_2$)  are BFS and not LexBFS orderings (DFS and not LexDFS orderings, resp.).
}
\label{fig:3BFSnotLBFS}
\end{figure}

\begin{lemma}
\label{lemma:characterization}
If a connected graph $G$ does not contain a diamond, or a pan as an induced subgraph, then $G$ is either acyclic, or a cycle on at least $4$ vertices, or a complete graph, or a complete bipartite graph.
\end{lemma}

\begin{proof}
Let $G$ be a graph that does not contain a diamond, or a pan as induced subgraph. From~\cref{thm:olariu} it follows that $G$ is either a complete multipartite graph, or a triangle-free graph. 

Let first $G$ be a complete multipartite graph, with partition classes $S_1,\dots, S_k$. If all partition classes of $G$ have one vertex, then $G$ is a complete graph, so we may assume without loss of generality that $|S_1|\ge 2$. Let $x,y\in S_1$. If there are exactly two partition classes of $G$, then $G$ is a complete bipartite graph. Assume that there are at least three partition classes in $G$, and let $z\in S_2$, $w\in S_3$. Then the vertices $\{x,y,z,w\}$ form a diamond in $G$; a contradiction. 

Let now $G$ be a triangle-free graph. If $G$ does not contain any cycle, then $G$ is a tree, and we are done. 
Assume first that $G$ contains a cycle of length at least five and let $C$ be such a cycle in $G$. If $G=C$, we are done, so assume that there is a vertex $v$ in $V(G)\setminus V(C)$ having a neighbor in $C$. If $v$ has exactly one neighbor in $C$, then $V(C)\cup \{v\}$ induce a cycle with pendant vertex in $G$, so $v$ has at least two neighbors in $C$. We know that $G$ is triangle-free, so no two consecutive vertices of $C$ are adjacent to $v$. Let $v_i,v_j\in C$, $i<j$ be neighbors of $v$ such that $|j-i|=j-i$ is minimal. Then $vv_{i-1}\notin E(G)$ and vertices $v, v_{i-1}, v_i,v_{i+1},\dots, v_j$ form a cycle with pendant vertex, unless it holds that $v_{i-1}v_j\in E(G)$, that is, unless the vertices $v_{i-1}$ and $v_j $ are consecutive in $C$, meaning that the distance between $v_i$ and $v_j$ in $C$ is equal to two and that $C$ is a cycle on four vertices. Our assumption was that $C$ is a cycle on at least $5$ vertices, so we have a contradiction. It follows that the vertex $v$ does not exist and $G=C$.

Assume now that any cycle in $G$ contains exactly four vertices, and let $C$ be such a cycle, with vertices $v_1,v_2,v_3,v_4$ in consecutive order. 
We know that $G$ has no odd cycles, so $G$ is bipartite graph. Also, we know that $C$ is a complete bipartite graph. Let $F$ be a subgraph of $F$ that contains $C$ such that $F$ is maximal complete bipartite subgraph of $G$, and let $(A,B)$ be a partition of $F$. Without loss of generality we may assume that $v_1,v_3\in A$ and $v_2,v_4\in B$. If $G=F$, then $G$ is a complete bipartite graph, and we are done, so assume there is a vertex $v\in V(G)\setminus V(F)$. A graph $G$ is connected, so $v$ has a neighbor in $F$. Let without loss of generality $u\in A$ be a neighbor of $v$. We know by definition of $F$ that $u$ is adjacent to all vertices in $B$, so it cannot be that $v$ has a neighbor in $B$, since otherwise that neighbor together with vertices $u$ and $v$ would form a triangle in $G$. It follows that $(A, B\cup \{v\})$ is a partition of a bipartite graph, and from the maximality of $F$ it follows that $v$ has a non-neighbor in $A$. Let $x\in A$ be a non-neighbor of $u$. (Observe that it can happen that $\{x,u\}\cap \{v_1,v_3\}\neq \emptyset$.) Taking the vertices $\{x,u,v_2,v_4,v\}$ we get the forbidden $C_4$ with a pendant edge; a contradiction. It follows that $G=F$ and thus $G$ is a complete bipartite graph, as we wanted to show.
\end{proof}
%


\begin{lemma}\label{lem:(L)BFS-cycles-forests-cliques}
In the following graph classes every BFS ordering is a LexBFS ordering. 
\begin{enumerate}[nosep, label = \roman*)]
\item cycles
\item forests
\item complete graphs
\item complete bipartite graphs
\end{enumerate}
\end{lemma}
\begin{proof}
We prove the lemma for each case separately.

\begin{enumerate}[nosep, label = \roman*)]
\item Assume for contradiction this is not true, and let $G$ be a cycle with ordering $\sigma$ that is a BFS ordering and not a LexBFS ordering. By~\cref{thm:lbfs-ordering-characterization} it follows that there are vertices $a<_\sigma b<_\sigma c$ such that $ab\notin E(G)$, $ac\in E(G)$ and for every $d'<_\sigma a$ it holds that either $d'b\notin E(G)$, or $d'c\in E(G)$. Similarly, from the~\ref{thm:bfs-ordering-characterization} it follows that there is a vertex $d<_\sigma $ such that $db\in E(G)$. Then it must be that $d'c\in E(G)$, so $b$ and $c$ are both neighbors of $d'$ in $G$. We know that $G$ is a cycle, so every vertex in $G$ is of degree $2$, and thus $b$ and $c$ are the only neighbors of $d'$ in $G$. Since $\sigma$ is a BFS ordering, at every step it visits a neighbor of some already visited vertex, so it must be that $\sigma(d')=1$. Then the neighbors of $d'$ are visited before non-neighbors of $d'$, so vertices $b$ and $c$ must be visited before $a$ in the BFS ordering $\sigma$. This is a contradiction with the definition of $a,b,c$, so such an ordering $\sigma$ does not exist, and every BFS ordering of $G$ is also a LexBFS ordering of $G$.
\item Let $\sigma $ be a BFS ordering of a forest graph $G$, and let $\sigma(v)=1$. If we do a LexBFS on $G$ starting in $v$, at every step of iteration all the unvisited vertices have a label consisting just of one number - a number belonging to the parent of the unvisited vertex. Thus, the label of every vertex consists just of a number belonging to the first visited neighbor. It means that putting the vertices in a queue in BFS is exactly the same as ordering vertices with respect to the lexicographic maximal label, so $\sigma$ is a LexBFS of $G$. 

\item If $v$ is arbitrary vertex of a complete graph $G$, once the vertex $v$ is visited, every unvisited vertex in $G$ gets a label from $v$. It means that at iteration step of LexBFS all the unvisited vertices in $G$ have the same label, so we can choose any among them. Any ordering of vertices of a complete graph is BFS and LexBFS ordering. 
\item Let $G$ be a complete bipartite graph with partition classes $A$ and $B$, and let $\sigma$ be a BFS ordering of $G$. Assume without loss of generality that $v\in A$ is a first vertex in ordering $\sigma.$ BFS is a layered search on $G$, so after visiting $v$ we visit all the neighbors of $v$ in $B$. After that, we visit all the vertices that are on distance $2$ from vertex $v$ in $G$, and so on, until we visit all the vertices in $G$. This search is also a LexBFS search, since at every step of LexBFS the vertices of $G$ in the same partition of $V(G)$ all have the same labels, and we can choose any among them. Also, all vertices that are on some distance $i$ from $v$ belong either to $A$ or $B$, so $\sigma$ is a LexBFS ordering of $G$. 
\end{enumerate}
\end{proof}

\cref{lemma:paw-diamond-bfs,lemma:characterization,lem:(L)BFS-cycles-forests-cliques} imply the following.

\begin{corollary}\label{cor:(L)BFS}
For any graph $G$, the following is equivalent: 
\begin{enumerate}[nosep, label = \roman*)]
    \item Every BFS ordering of $G$ is a LexBFS ordering of $G$.
    \item Graph $G$ is $\{$pan, diamond$\}$-free.
\end{enumerate}
\end{corollary}

\subsection{Depth First Search and Lexicographic Depth First Search}

In this section we prove \cref{thm:(L)DFS}.
In the process we utilise the characterization of (Lex)DFS orderings (so-called ``point conditions'') described in \cref{thm:ldfs-ordering-characterization}. 
We start by giving sufficient condition regarding when DFS and LexDFS are not equivalent.
\begin{lemma}
If a graph $G$ contains a pan or a diamond as an induced graph, 
then there is a DFS ordering of $G$ that is not a LexDFS ordering of $G$.
\end{lemma}
\begin{proof}
The claim can be easily justified by giving a prefix of an order $\sigma$ that is a DFS order and not a LexDFS order of a graph containing a pan or a diamond.
First consider the case when $G$ contains a paw as an induced subgraph. Using the same notation as in~\cref{fig:3BFSnotLBFS} (left) we can define the DFS ordering $\sigma_2$ of $G$ starting in $c$, with first four vertices in $\sigma_2$ being $b,c,d,a$, in that order. 

Similarly, if $H$ is a diamond contained in $G$, we can define the DFS ordering $\sigma_2$ of $G$ having the same prefix: starting in $b$ and visiting consecutively vertices $c,d,a$ (\cref{fig:3BFSnotLBFS} right). In both cases $\sigma_2 $ is a DFS ordering, since it starts with a vertex $b$ and traverse the graph as deep as possible.
Also, $\sigma_2$ cannot be a LexDFS ordering, since in both cases the vertex $a$ has a label $\{21\}$, while $d$ has a label $\{1\}$, so $a$ should appear before $d$, no matter how the rest of $\sigma$ is defined. 
\end{proof}

As we already know, paw is defined as $3$-pan. In the following lemma we give a result showing that the eqivalence between DFS and LexDFS in $G$ implies that $G$ does not contain any pan as induced subgraph. This result generalizes the part of previous lemma that considered the existence of a paw graph in $G$. 

\begin{lemma}
If a graph $G$ contains a pan as an induced subgraph, 
then there is a DFS ordering of $G$ that is not a LexDFS ordering of $G$ (that is, DFS and LexDFS are not equivalent in $G$).
\end{lemma}
\begin{proof}
The claim can be easily justified by giving a prefix of an order $\sigma$ that is a DFS order and not a LexDFS order of a graph containing a pan.
Let $G$ be a graph and let $H$ be a pan, contained in $G$ as an induced subgraph. Let the vertices of $H$ be denoted by $v_1,\dots, v_n, v$, where vertices $v_1,v_2,\dots, v_n$ form a cycle in this order, and $v$ is a vertex of degree $1$, adjacent to $v_{n-1}$. 
We can define the DFS ordering $\sigma$ of $G$ starting in $v_1$, with first $n$ vertices in $\sigma$ being $v_1,v_2,\dots, v_{n-2}, v_{n-1},v$, in that order. It is clear that $\sigma $ is a DFS order, since it has a prefix that is a path, and continues traversing the graph $G$ using DFS. At the same time we have that $\sigma$ is not a LexDFS ordering of $G$. We know that the vertex $v_n$ appears in $\sigma$ after all other vertices from $H$. It follows that $v_1<_\sigma v<_\sigma v_n$ with $v_1v_n\in E(G)$ and $v_1v\notin E(G)$. By~\cref{thm:ldfs-ordering-characterization} it follows that there is a vertex $v_i$: $v_1<_\sigma v_i<_\sigma v$, $v_iv\in E(G)$ and $v_iv_n\notin E(G)$. But among the vertices that are visited before $v$ in $\sigma$ there is just a vertex $v_{n-1}$ that is adjacent to $v$. We have that $v_{n-1}v_n\in E(G)$, so the condition of~\cref{thm:ldfs-ordering-characterization} is not fulfilled, and $\sigma $ is not a LexDFS ordering of $G$.  
\end{proof}
It turns out that the equivalence between DFS and LexDFS in a graph $G$ implies that $G$ is a $\{$pan, diamond$\}$-free graph. In the following lemma we show that this is also sufficient. 

\begin{lemma}
If a graph $G$ does not contain a diamond, or a pan as an induced subgraph, then DFS and LexDFS are equivalent in $G$.
\end{lemma}
\begin{proof}

Let ${\cal G}$ be a class of $\{$diamond, pan$\}$-free graphs. We want to prove that DFS and LexDFS are equivalent in $G$. Assume for contradiction this is not the case, and let $G$ be a graph and $\sigma $ an ordering of $G$ that is DFS but not LexDFS ordering. 

Since $\sigma$ is a DFS ordering, it satisfies the characterization given in~\cref{thm:dfs-ordering-characterization}: if $ a <_\sigma b <_\sigma c$ and $ac \in E$ and $ab \notin E$, then there exists a vertex $d$ such that $a<_\sigma d <_\sigma b$ and $db \in E$.
From~\cref{thm:ldfs-ordering-characterization} 
it follows that there exist vertices $a,b,c$ in $G$ such that $a<_\sigma b<_\sigma c$, $ab\notin E(G), ac\in E(G)$ and for all vertices $d$ satisfying $a<_\sigma d<_\sigma b$ it holds that either $dc\in E(G)$, or $db\notin E(G)$.

Let $a<_\sigma b<_\sigma c$ be leftmost vertices that don't satisfy the characterization of LexDFS ordering $\sigma$ given in~\cref{thm:ldfs-ordering-characterization}. We know that $\sigma$ is DFS ordering of $G$, so there exists a vertex $d_1$ such that $a<_\sigma d_1<_\sigma b$ and $d_1b\in E(G)$. Then it follows that $d_1c\in E(G)$. 
Also, we have that $ad_1\notin E(G)$ and $bc\notin E(G)$, since otherwise we get a pan or a diamond. 

Consider now the vertices $a,d_1,c$. It holds that $a<_\sigma d_1<_\sigma c$, with $ac\in E(G)$ and $ad_1\notin E(G)$. These vertices satisfy the LexDFS ordering characterization, so there exists a vertex $d_2$ such that $a<_\sigma d_2<_\sigma d_1$, and $d_2d_1\in E(G)$, $d_2c\notin E(G)$. If $d_2b\in E(G)$, then the vertices $d_2,d_1,b,c$ form a $3$-pan. If $ad_2\in E(G)$, then the vertices $a,d_2,d_1,b,c$ form a $4$-pan. Hence, it follows that $ad_2\notin E(G)$ and $bd_2\notin E(G)$. Now we can continue this process by considering the vertices $a,d_2,c$ and apply the characterization of the LexDFS ordering. Let $d_1,d_2,\dots, d_k$ be a sequence of vertices defined in the following way: given a triple $a<_\sigma d_i<_\sigma c$ such that $ad_i\notin E(G)$, $ac\in E(G)$, $d_{i+1}$ is a vertex satisfying the conditions: $a<_\sigma d_{i+1}<_\sigma d_i$, $d_{i+1}d_i\in E(G)$, and $d_{i+1}c\notin E(G)$. Let $k$ be the maximum number of such vertices. We know that the number of vertices between $a$ and $c$ is finite, so $k$ is a finite number. We show the following claims:
\begin{enumerate}[nosep, label = \roman*)]
\item $d_id_{i+1}\in E(G)$, for all $i\in \{2,\dots, k\}$ - this is true by definition of vertices $d_i$, 
\item $d_{i+1}c\notin E(G)$, for all $i\in \{2,\dots, k\}$ - this is true by definition of vertices $d_i$,
\item $d_{i}a\notin E(G)$, for all $i\in \{1,2,\dots, k-1\}$ - this is true by definition of vertices $d_i$,
\item $d_ka\in E(G)$ - if this would not be true, then we can continue the process, and $d_k$ is not the last vertex in this sequence
\item $d_id_j\notin E(G)$, for all $i,j\in \{1,\dots, k\}$, such that $|i-j|\ge 2$ - Assume the opposite: let $d_i$ and $d_j$ be adjacent vertices with $|i-j|\ge 2$, such that $|i-j|$ is minimum and among all pairs $i,j$ satisfying this minimality condition, let $i,j$ be the smallest possible (equivalently, the right-most in the ordering $\sigma $). Without loss of generality we may assume that $j<i$. From the minimality of $|i-j|$ it follows that the vertices $d_j,d_{j+1}, \dots, d_{i}$ form an induced cycle in $G$. If $j=1$, then the vertices $\{d_j,d_{j+1}, \dots, d_{i}, c\}$ form a pan in $G$. Similarly, if $i=k$, then the vertices $\{d_j,d_{j+1}, \dots, d_{i}, a\}$ form a pan in $G$. It follows that $j>1$ and $i<k$. Consider now the vertex $d_{j-1}$. By the way we chose $i$ and $j$ it follows that $d_{j-1}d_\ell\notin E(G)$ for all $\ell\in\{d_{j+1},\dots, d_{i-1}\}$.
If $d_{j-1}d_i\notin E(G)$, then the vertices $\{d_{j-1},d_j,d_{j+1}, \dots, d_{i}\}$ form a pan in $G$. 
If $d_{j-1}d_i\in E(G)$, we consider two cases. 
First, if $i-j=2$, then the vertices $\{d_i, d_{i-1}, d_{i-2}=d_j, d_{i-3}=d_{j-1}\}$ form a diamond. 
Second, if $i-j>2$, then the vertices $\{d_{j-1}, d_j, d_i,d_{i-1}\}$ form a $3$-pan. 
In both cases we get a contradiction with the definition of $G$, meaning that such an edge $d_id_j$ cannot exists in $G$. 
\item $d_ib\notin E(G)$, for all $i\in\{2,\dots, k\}$ - Assume for contradiction that $j$ is a minimal value in $\{2,\dots, k\}$ such that $d_jb\in E(G)$. Then the vertices $\{d_1,\dots, d_j, b,c \}$ form a pan in $G$; a contradiction. 
\end{enumerate}

Consider now the vertices $\{d_1,\dots, d_k, a, c, b\}$. From the above claims it follows that they form a pan, where $b$ is a vertex of degree one. This is a contradiction with the definition of $G$. It follows that the vertices $d_1,\dots, d_k$ defined as above cannot exist, so $\sigma $ is a LexDFS ordering of $G$, as we wanted to show. 
\end{proof}
From the statements above the proof of main claim of this section follows immediately. 

\begin{corollary}\label{cor:(L)DFS}
For any graph $G$, the following is equivalent: 
\begin{enumerate}[nosep, label = \roman*)]
    \item Every DFS ordering of $G$ is a LexDFS ordering of $G$.
    \item Graph $G$ is $\{$pan, diamond$\}$-free.
\end{enumerate}
\end{corollary}

\subsection{Generic graph search and Maximal Neighbourhood Search}
\begin{lemma}\label{lem:MNS-cycles-forests-cliques}
In the following graph classes every graph search is an MNS ordering. 
\begin{enumerate}[nosep, label = \roman*)]
\item trees
\item cycles
\item complete graphs
\item complete bipartite graphs
\end{enumerate}
\end{lemma}

\begin{proof}
We prove each statement separately.
\begin{enumerate}[nosep, label = \roman*)]
\item Let $G$ be a tree, and fix any generic vertex-ordering $\alpha$. Since every non-starting vertex must have an $\alpha$-smaller vertex, the only way to violate MNS order paradigm would be to select a candidate with label which is a strict subset of a label by another candidate. 
Since in case of trees all candidates at all steps have the label of length exactly one, such violation cannot happen. 
\item In case of cycles, at every non-last step we are in a similar situation as in the case of trees -- every candidate vertex contains a label of length one, and is hence safe to choose. 
The only exception to this is the last vertex which will have a label of length two. Since it is the only remaining vertex, this will not violate MNS paradigm as well.
\item If $G$ is complete, then every ordering of vertices is equivalent, hence it is always generic search, as well as MNS search order.
\item Suppose $G$ is a complete bipartite graph on bipartitions $A,B$, and let $\alpha$ be any generic vertex-ordering. Furthermore let $a,b,c$ be arbitrary vertices such that $a<_alpha b<_alpha c$, and $ac\in E(G)$ while $ab\notin E(G)$, and wlog. assume $a\in A$. To satisfy \cref{thm:mns-ordering-characterization} it is enough to find a vertex $d<_\alpha b$ such that $db\in E(G)$ and $dc]\notin E(G)$. This is clearly true if $a$ is the starting vertex of $\alpha$, as in this case we fix $d$ to be its immediate successor and observe that $d,c\in B$ while $b\in A$.

But in the other case, when $a$ is not the start of $\alpha$, then set $d$ to be any of its neighbors such that $d<_\alpha a$. Such a neighbor exists as $\alpha$ is a generic search order. Again observe that $d,c\in B$ while $b\in A$, which concludes the proof of the claim.
\end{enumerate}
\end{proof}

\begin{lemma}
If a graph $G$ contains a pan, or a diamond as an induced subgraph, 
then there is a generic ordering of $G$ that is not a MNS ordering of $G$ (that is, generic search and MNS are not equivalent in $G$).
\end{lemma}

\begin{proof}
The claim can be easily justified by giving a prefix of an order $\sigma$ that is a search order and not an MNS order of a graph containing a pan or a diamond.
First consider the case when $G$ contains a diamond as an induced subgraph. Using the same notation as in~\cref{fig:3BFSnotLBFS} (right) we can define the search ordering $\sigma_2$ of $G$ starting in $c$, with first four vertices in $\sigma_2$ being $b,c,d,a$, in that order. It is clear that vertices $(b,d,a)$ violate the characterisation from \cref{thm:mns-ordering-characterization}.

Similarly, if $G$ contains an induced pan on vertices $v_0,v_1,\dots,v_k$ where $v_0$ and $v_2$ are of degrees 1 and 3, respectively. Now construct a search order which starts with 
$$
(v_2,v_3,\dots,v_k,v_0,v_1,\dots).
$$

Again, it is clear that the triplet $(b,d,a)$ violates the MNS search paradigm, which concludes the proof of the claim.
\end{proof}

From the statements above the proof of main claim of this section follows immediately. 

\begin{corollary}\label{cor:g-MNS}
For any graph $G$, the following is equivalent: 
\begin{enumerate}[nosep, label = \roman*)]
    \item Every graph search ordering of $G$ is a MNS ordering of $G$.
    \item Graph $G$ is $\{$pan, diamond$\}$-free.
\end{enumerate}
\end{corollary}

The above corollary, together with \cref{cor:(L)BFS,cor:(L)DFS} give the proof of \cref{thm:(L)DFS}.

\section{Proof of \cref{thm:thmMNSldfsLBFS}}\label{sec:P4-C4}

In this section we prove \cref{thm:thmMNSldfsLBFS}, that is, we characterize graphs for which it holds that every MNS ordering is also a LexBFS ordering, and graphs for which it holds that every MNS ordering is also a LexDFS ordering. As stated in \cref{thm:thmMNSldfsLBFS} it turns out that in both cases the same graphs are forbidden as induced subgraphs, as the following lemmas show. 
\begin{theorem}
If every MNS ordering of $G$ is also a LexBFS ordering of $G$, then $G$ is a $\{P_4,C_4\}$-free graph.
\end{theorem}
\begin{proof}
Let $G$ be a graph in which every MNS ordering is LexBFS ordering. Assume for contradiction that $G$ is not $\{P_4,C_4\}$-free graph.

Assume first that $G$ contains an induced $P_4$, and let $v_1,v_2,v_3,v_4$ be vertices of $P_4$. Let $\sigma$ be a MNS ordering of vertices in $G$ with $\sigma(1)=v_2$. Then any neighbor of $v_2$ can be selected next, so let $\sigma(2)=v_3$.  Then the label of vertex $v_4$ contains a vertex $v_3$, while a label of vertex $v_1$ does not contain it, meaning that the label of a vertex $v_4$ will never be a proper subset of a label of a vertex $v_1$, and we can select vertex $v_4$ before vertex $v_1$ in $\sigma$. At the same time once the vertices $v_2$ and $v_3$ are selected, from the definition of LexBFS it follows that all the neighbors of $v_1$ must be selected before its non-neighbors, so if $\sigma$ is a LexBFS ordering of $G$, it must be that $v_1<_\sigma v_4$; a contradiction.
If $G$ contains an induced $C_4$, the same reasoning holds, so we get a contradiction in any case and it follows that $G$ is a $\{P_4,C_4\}$-free graph.
\end{proof}

\begin{lemma}
If every MNS ordering of $G$ is also a LexDFS ordering of $G$, then $G$ is a $\{P_4,C_4\}$-free graph.
\end{lemma}
\begin{proof}
Let $G$ be a graph in which every MNS ordering is LexDFS ordering. Assume for contradiction that $G$ is not $\{P_4,C_4\}$-free graph.

Assume first that $G$ contains an induced $P_4$, and let $v_1,v_2,v_3,v_4$ be vertices of $P_4$. Let $\sigma$ be a MNS ordering of vertices in $G$ with $\sigma(1)=v_2$. Then any neighbor of $v_2$ can be selected next, so let $\sigma(2)=v_3$.  Then a label of vertex $v_4$ contains a vertex $v_3$, while a label of vertex $v_1$ does not contain it, meaning that the label of a vertex $v_4$ will never be a proper subset of a label of a vertex $v_1$, and we can select vertex $v_4$ before vertex $v_1$ in $\sigma$. At the same time once the vertices $v_2$ and $v_3$ are selected, from the definition of LexBFS it follows that all the neighbors of $v_1$ must be selected before its non-neighbors, so if $\sigma$ is a LexBFS ordering of $G$, it must be that $v_1<_\sigma v_4$; a contradiction.
If $G$ contains an induced $C_4$, the same reasoning holds, so we get a contradiction in any case and it follows that $G$ is a $\{P_4,C_4\}$-free graph.
\end{proof}

\begin{figure}[H]
\centering
\begin{tikzpicture}[vertex/.style={inner sep=2pt,draw, fill,circle}, ]
\begin{scope}
\node[vertex, label=below:$ b $] (b) at (0,0) {};
\node[vertex, label=below:$ c $] (c) at (2,0) {};
\node[vertex, label=above:$ a $] (a) at (0,1) {};
\node[vertex, label=above:$ d $] (d) at (2,1) {};
\draw[] (a) -- (b) -- (c) -- (d) -- (a);
\node[] (s1) at (1,-1) {$\sigma_1=(b,c,d,a)$};
\node[] (s2) at (1,-1.75) {$\sigma_2=(b,c,a,d)$};

\end{scope}
\begin{scope}[xshift=5cm]
\node[vertex, label=above:$ a $] (a) at (0,0) {};
\node[vertex, label=above:$ b $] (b) at (1,0) {};
\node[vertex, label=above:$ c $] (c) at (2,0) {};
\node[vertex, label=above:$ d $] (d) at (3,0) {};
\draw[] (a) -- (b) -- (c) -- (d);
\node[] (s1) at (1.5,-1) {$\sigma_1=(b,c,d,a)$};
\node[] (s2) at (1.5,-1.75) {$\sigma_2=(b,c,a,d)$};
\end{scope}
\end{tikzpicture}
\caption[MNS orderigns that are not LexBFS or LexDFS.]{A cycle (left) and a path (right) on $4$ vertices. $\sigma_1$ is a MNS ordering that is not a LexBFS ordering. $\sigma_2$ is a MNS ordering that is not a LexDFS ordering.}
\label{fig:forbiddengraphs-p4c4}
\end{figure}

It follows that given a graph $G$ satisfying the property that every MNS ordering is a LexBFS (resp., LexDFS) ordering, it must be true that $G$ is $\{P_4,C_4\}$-free graph. It turns out that this is also sufficient condition - in a $\{P_4,C_4\}$-free graphs every MNS ordering is also a LexBFS ordering and a LexDFS ordering. We prove these claims in the following two theorems. Observe that $\{P_4,C_4\}$-free graphs are also known as trivially-perfect graphs, and can be obtained from the $1$-vertex graphs using the operations of disjoint union and addition of universal vertices~\cite{golumbic1978trivially}.

\begin{lemma}\label{lem:p4c4-lexbfs}
Let $G$ be a $\{P_4,C_4\}$-free graph.
Then every MNS ordering of $G$ is also a LexBFS ordering of $G$. 
\end{lemma}
\begin{proof}

Let $G$ be a $\{P_4,C_4\}$-free graph, and assume for contradiction that there is an ordering $\sigma$ of vertices in $G$ that is a MNS ordering of $G$ and not a LexBFS ordering of $G$. From~\cref{thm:lbfs-ordering-characterization} we know that there exist vertices $a,b,c$ in $G$ such that $a<_\sigma
b<_\sigma c$ and $ac\in E(G)$, $ab\notin E(G)$, and for every $d<_\sigma a$ it holds that either $db\notin E(G)$ or $dc\in E(G)$. Let $a,b,c$ be the left-most such triple (that is, for any other triple $a'<_\sigma
b'<_\sigma c'$ and $a'c'\in E(G)$, $a'b'\notin E(G)$, with $\sigma(a')+\sigma(b')+\sigma(c')< \sigma(a)+\sigma(b)+\sigma(c)$ the~\cref{thm:lbfs-ordering-characterization} is satisfied). 

We know that $\sigma$ is MNS ordering, so by \cref{thm:mns-ordering-characterization} it follows that there exists a vertex $ d<_\sigma b$ in $G$ such that $db\in E(G)$ and $dc\notin E(G)$. It cannot be that $d<_\sigma a$, so it follows that have that $a<_\sigma d<_\sigma b$. If $ad\in E(G)$, or $bc\in E(G)$, then the vertices $\{a,b,c,d\}$ induce either a $P_4$, or a $C_4$ in $G$; a contradiction. It follows that $ad\notin E(G)$ and $bc\notin E(G)$. 

Now the vertices $a<_\sigma d<_\sigma c$ form a triple with $ac\in E(G)$ and $ad\notin E(G)$, so they must satisfy the \cref{thm:lbfs-ordering-characterization} and there exists a vertex $d_1<_\sigma a$ such that $d_1d\in E(G)$ and $d_1c\notin E(G)$. Moreover, it follows that $d_1a\notin E(G)$, for otherwise the vertices $\{d_1,a,d,c\}$ form a $P_4$ in $G$. 

Consider now the vertices $d_1<_\sigma a<_\sigma d$. They form a triple satisfying $d_1d\in E(G)$ and $d_1a\notin E(G)$, so by \cref{thm:lbfs-ordering-characterization} there exists a vertex $d_2<_\sigma d_1$ such that $d_2a\in E(G)$ and $d_2d\notin E(G)$. If $d_2d_1\in E(G)$, then the vertices $\{d_2,d_1,a,d\}$ form a $P_4$, a contradiction. 
We can continue the same process and apply \cref{thm:lbfs-ordering-characterization} on vertices $d_2,d_1,a$ in order to obtain a vertex $d_3$, and then apply the same process on vertices $d_i, d_{i-1}, d_{i-2}$ to obtain vertices $d_{i+1}$, for $i\ge 3$, as in \cref{thm:lbfs-ordering-characterization}.
Since a graph $G$ is finite, in this process we get the vertices $d_1,\dots, d_k$, for some finite number $k$. Let $k$ be the length of a maximal sequence of such vertices. It will be true that $d_i<_\sigma d_{i-1}$ for all $i\ge 2$, and 
\begin{equation}
\label{eqn:edges}
d_{i+2}d_i\in E(G)\, {\rm and }\, d_{i+3}d_i\notin E(G),
\end{equation} for all $i\ge 1$.

We prove the following claim inductively. 

\noindent\textbf{Claim :} $d_id_{i-1}\notin E(G)$, for $i\in \{2, \dots, k\}$

We know that $d_2d_1\notin E(G)$, so the inductive basis holds trivially. Assume now that for all $i\le j$ we have that $d_id_{i-1}\notin E(G)$. Let $i=j+1$. If $d_{j+1}d_j\in E(G)$, then the vertices $\{d_{j+1},d_j,d_{j-1},d_{j-2}\}$ induce a $P_4$ in $G$. This is true since $d_jd_{j-1}\notin E(G)$, $d_{j-1}\notin E(G)$ by inductive hypothesis, while other edges and non-edges follow from \ref{eqn:edges}. A contradiction with definition of $G$, so the claim follows.
 
It follows that vertices $d_k<_\sigma d_{k-1}<_\sigma d_{k-2}$ satisfy that $d_kd_{k-2}\in E(G)$ and $d_kd_{k-1}\notin E(G)$, so by \cref{thm:lbfs-ordering-characterization} there exists a vertex $d_{k+1}$ and $k$ is not maximal; a contradiction. 

\end{proof}

\begin{lemma}\label{lem:p4c4-lexdfs}
Let $G$ be a $\{P_4,C_4\}$-free graph.
Then every MNS ordering of $G$ is also a LexDFS ordering of $G$. 
\end{lemma}
\begin{proof}

Let $G$ be a $\{P_4,C_4\}$-free graph, and assume for contradiction that there is an ordering $\sigma$ of vertices in $G$ that is a MNS ordering of $G$ and not a LexDFS ordering of $G$. From~\cref{thm:ldfs-ordering-characterization} we know that there exist vertices $a,b,c$ in $G$ such that $a<_\sigma
b<_\sigma c$ and $ac\in E(G)$, $ab\notin E(G)$, and for every $a<_\sigma d<_\sigma b$ it holds that either $db\notin E(G)$ or $dc\in E(G)$. Let $a,b,c$ be the left-most such triple (that is, for any other triple $a'<_\sigma
b'<_\sigma c'$ and $a'c'\in E(G)$, $a'b'\notin E(G)$, with $\sigma(a')+\sigma(b')+\sigma(c')< \sigma(a)+\sigma(b)+\sigma(c)$ the~\cref{thm:ldfs-ordering-characterization} is satisfied). 

We know that $\sigma$ is MNS ordering, so by \cref{thm:mns-ordering-characterization} it follows that there exists a vertex $ d<_\sigma b$ in $G$ such that $db\in E(G)$ and $dc\notin E(G)$. It cannot be that $a<_\sigma d<_\sigma b$, so it follows that have that $d<_\sigma a$. If $ad\in E(G)$, or $bc\in E(G)$, then the vertices $\{a,b,c,d\}$ induce either a $P_4$, or a $C_4$ in $G$; a contradiction. It follows that $ad\notin E(G)$ and $bc\notin E(G)$. 

Now the vertices $d<_\sigma a<_\sigma b$ form a triple with $db\in E(G)$ and $da\notin E(G)$, so they must satisfy the \cref{thm:ldfs-ordering-characterization} and there exists a vertex $d<_\sigma d_1<_\sigma a$ such that $d_1a\in E(G)$ and $d_1b\notin E(G)$. Moreover, it follows that $dd_1\notin E(G)$, for otherwise the vertices $\{d,d_1,a,b\}$ form a $P_4$ in $G$. 

Consider now the vertices $d<_\sigma d_1<_\sigma b$. They form a triple satisfying $db\in E(G)$ and $dd_1\notin E(G)$, so by \cref{thm:ldfs-ordering-characterization} there exists a vertex $d<_\sigma d_2<_\sigma d_1$ such that $d_2d_1\in E(G)$ and $d_2b\notin E(G)$. If $dd_2\in E(G)$, then the vertices $\{d,d_2,d_1,b\}$ form a $P_4$, a contradiction. 
We can continue the same process and apply \cref{thm:ldfs-ordering-characterization} on vertices $d,d_2,b$ in order to obtain a vertex $d_3$, and then apply the same process on vertices $d,d_i,b$ to obtain vertices $d_{i+1}$, for $i\ge 3$, as in \cref{thm:ldfs-ordering-characterization}.
Since a graph $G$ is finite, in this process we get the vertices $d_1,\dots, d_k$, for some finite number $k$.

Let $k$ be the length of a maximal sequence of such vertices. It will be true that $d<_\sigma d_i<_\sigma d_{i-1}$ for all $i\ge 2$, and 
\begin{equation}
\label{eqn:edges2}
d_{i+1}d_i\in E(G)\, {\rm and }\, d_ib\notin E(G),
\end{equation} for all $i\ge 1$.

We prove the following claim inductively. 

\noindent\textbf{Claim :} $dd_i\notin E(G)$, for $i\in \{1,2, \dots, k\}$

We know that $dd_1\notin E(G)$, so the inductive basis holds trivially. Assume now that for all $i\le j$ we have that $dd_{i}\notin E(G)$. Let $i=j+1$. If $dd_{j+1}\in E(G)$, then the vertices $\{b,d,d_{j+1},d_j\}$ induce a $P_4$ in $G$. This is true since $dd_{j}\notin E(G)$ by inductive hypothesis, while other edges and non-edges follow from \ref{eqn:edges2}. A contradiction with definition of $G$, so the claim follows.
 
It follows that vertices $d<_\sigma d_k<_\sigma b$ satisfy that $db\in E(G)$ and $dd_k\notin E(G)$, so by \cref{thm:ldfs-ordering-characterization} there exists a vertex $d_{k+1}$ such that $d<_\sigma d_{k+1}<_\sigma d_k$ and $k$ is not maximal; a contradiction. 
\end{proof}

From \cref{lem:p4c4-lexbfs,lem:p4c4-lexdfs} the proof of main theorem of this section follows  immediately. 
\thmMNSldfsLBFS*

In other words, it follows that MNS and LexBFS are equivalent in $G$ if and only if $G$ is a $\{P_4,C_4\}$-free graph, and similarly, MNS and LexDFS are equivalent in $G$ if and only if $G$ is a $\{P_4,C_4\}$-free graph.

\fi

\end{document}